\newcommand\floor[1]{\left\lfloor #1 \right\rfloor}
\newtheorem{Theorem}{Theorem}[section]
\newtheorem{Lemma}[Theorem]{Lemma}
\newtheorem{Defn}[Theorem]{Definition}
\theoremstyle{remark}
\newtheorem{rmk}{Remark}[section]
\theoremstyle{definition}
\newtheorem{algorithm}[Theorem]{Algorithm}
\newcommand{\Real}[1][]{\ensuremath{\mathbb{R}^{#1}}}
\newcommand{\IP}[2]{\ensuremath{\left< #1, #2 \right>}}
\newcommand{\Norm}[1]{\ensuremath{\left\| #1 \right\| }}
\newcommand{\Abs}[1]{\ensuremath{\left| #1 \right|}}
\newcommand{\Deriv}[2][]{\ensuremath{\frac{\diff #1}{\diff #2}}}
\newcommand{\PDeriv}[2][]{\ensuremath{\frac{\partial #1}{\partial #2}}}
\newcommand{\diff}{\, {\rm d}}
\newcommand{\Diff}{\mathbf{D}}
\newcommand{\yesnumber}{\addtocounter{equation}{1}\tag{\theequation}}
\newcommand{\G}{\mathbf{G}}
\newcommand{\M}{\mathbf{M}}
\newcommand{\I}{\mathbf{I}}
\newcommand{\R}{\mathbf{R}}
\newcommand{\btheta}{{\bm{\theta}}}
\newcommand{\p}{\mathbf{p}}
\newcommand{\tilU}{\tilde{U}}
\newcommand{\tilbtheta}{{\tilde{\btheta}}}
\newcommand{\tilp}{\tilde{\p}}
\renewcommand{\v}{\mathbf{v}}
\renewcommand{\u}{\mathbf{u}}
\newcommand{\F}{\mathbf{F}}
\newcommand{\Fe}{\mathbf{F}_{\epsilon}}
\newcommand{\e}{\mathbf{e}}
\newcommand{\bGamma}{\bm{\Gamma}}
\newcommand{\g}{\bm{g}}	
\newcommand{\gpar}{g_{\parallel}}
\newcommand{\gort}{g_{\perp}}
\newcommand{\blind}{1}
\begin{document}

\def\spacingset#1{\renewcommand{\baselinestretch}%
{#1}\small\normalsize} \spacingset{1}


\if1\blind
{
  \title{\bf Geometrically Tempered Hamiltonian Monte Carlo}
  \author{Akihiko Nishimura \\
    Department of Mathematics, Duke University \\
    and \\
    David B.\ Dunson \\
    Department of Statistical Science, Duke University}
  \maketitle
} \fi

\if0\blind
{
  \bigskip
  \bigskip
  \bigskip
  \begin{center}
  \spacingset{1.8}
    {\LARGE \bf Geometrically Tempered Hamiltonian Monte Carlo}
  \end{center}
  \medskip
} \fi

\bigskip
\begin{abstract}
Hamiltonian Monte Carlo (HMC) has become routinely used for sampling from posterior distributions. Its extension Riemann manifold HMC (RMHMC) modifies the proposal kernel through distortion of local distances by a Riemannian metric.  The performance depends critically on the choice of metric, with the Fisher information providing the standard choice.  In this article, we propose a new class of metrics aimed at improving HMC's performance on multi-modal target distributions. We refer to the proposed approach as geometrically tempered HMC (GTHMC) due to its connection to other tempering methods.  We establish a geometric theory behind RMHMC to motivate GTHMC and characterize its theoretical properties. Moreover, we develop a novel variable step size integrator for simulating Hamiltonian dynamics to improve on the usual St\"{o}rmer-Verlet integrator which suffers from numerical instability in GTHMC settings.  We illustrate GTHMC through simulations, demonstrating generality and substantial gains over standard HMC implementations in terms of effective sample sizes. 
\end{abstract}

\noindent%
{\it Keywords:} Bayesian inference, Hamiltonian dynamics, Markov chain Monte Carlo, Riemannian geometry, tempering
\vfill

\newpage
\spacingset{1.45} 
\section{Introduction}
	Markov chain Monte Carlo (MCMC) is routinely used for Bayesian inference to generate samples from posterior distributions. Metropolis-Hastings (MH) provides a general subclass of algorithms adaptable to a broad range of posterior distributions, without the need for special structures such as conjugacy. Many MH algorithms are highly inefficient, however, and Hamiltonian Monte Carlo (HMC) has emerged as one of the most reliable approaches for efficient sampling in general settings.  The STAN software package takes advantage of this generality and performance \citep{stan16}.  It is well known, however, that HMC faces major problems when posterior distributions are multimodal.  This article attempts to address this problem to obtain a general approach for accelerating mixing of HMC including in multimodal cases.
	
	Hamiltonian dynamics generates trajectories that move along the level sets of a scalar function commonly referred to as a Hamiltonian or energy. This property is known as \textit{conservation of energy} in physics. HMC exploits this property to generate proposals that are far away from the current state yet are accepted with high probability. If the parameter of interest has a distribution with multiple modes separated by a region of low probability density, however, the conservation of energy almost completely eliminates the possibility of HMC transitioning from one mode to another in a small number of iterations (cf. Section~\ref{subsec:HMC_multi_modality} or \citet{neal10}). This issue is inherent in the choice of Hamiltonian dynamics underlying HMC's proposal mechanism and consequently most variations of HMC \citep{hoffman14, neal94, shahbaba13, sd14} similarly suffer in the presence of multi-modality.
	
	\cite{girolami11} proposed Riemann manifold HMC (RMHMC), an extension of HMC that modifies the underlying Hamiltonian dynamics through distortion of local distances by a Riemannian metric. Their choice of metric, Fisher information, is not designed to facilitate sampling from a multimodal target distribution, but their work spurred a question: can a metric be chosen to help HMC sample more efficiently from multi-modal distributions? (See ``Discussion on the paper'' section in \cite{girolami11}.)  In this paper, we provide a positive answer to this question by proposing a class of metric specifically designed to lower the ``energy barriers'' among the modes, thereby enabling trajectories of Hamiltonian dynamics to transition from one mode to another more frequently. We call RMHMC under this class of metric as geometrically tempered HMC (GTHMC) due to its similarities to other tempering methods. While geometric methods in statistics are usually motivated using the language of intrinsic geometry \citep{amari00, girolami11, xifara14}, we develop a geometric theory behind RMHMC using the language of extrinsic geometry, thereby making the results more explicit and intuitive as well as accessible to a wider audience.
	
	Choosing a metric to adapt HMC to multimodal target distributions was previously considered by \cite{lan14}. Their approach, however, requires knowledge of the mode locations, substantial hand tuning, and ad hoc additions of drifts to the dynamics which can in general undermine the desirable properties of RMHMC. Many of these issues arise from the lack of precise treatment of geometry behind RMHMC and are all solved by GTHMC. Another related work is \cite{roberts03} where they consider using what they call Langevin tempered dynamics as a proposal generation mechanism. This dynamics is a Langevin dynamics analogue of Hamiltonian dynamics under isometric tempering, a special case of our geometric tempering method discussed in Section~\ref{sec:ITHMC_and_DTHMC}. Both Langevin and Hamiltonian dynamics explore the parameter space with highly variable velocities under geometric tempering, making their discrete approximation challenging (Section~\ref{sec:variable_step_integrator} and \cite{roberts03}). The deterministic nature of Hamiltonian dynamics, however, allows an accurate approximation of the dynamics in a relatively efficient manner through the variable step size integrator proposed in Section~\ref{sec:variable_step_integrator}.
	
	The rest of the paper is organized as follows. In Section~\ref{sec:GTHMC_and_motivation}, we motivate our choice of metric for GTHMC by developing geometric intuitions behind RMHMC using the language of extrinsic geometry. Section~\ref{sec:ITHMC_and_DTHMC} provides two example classes of GTHMC algorithms. Section~\ref{sec:variable_step_integrator} develops a novel variable stepsize integrator for Hamiltonian dynamics, motivated by the need for an improvement over the standard St\"{o}rmer-Verlet scheme that produces unstable trajectories in GTHMC settings. An effective application of the variable stepsize integrator to GTHMC and other HMC variants calls for an improved acceptance-rejection mechanism, and this is also described in Section~\ref{sec:variable_step_integrator}. Both the integrator and acceptance-rejection algorithm are general tools of independent interest. In Section~\ref{sec:simulation}, we compare the performance of GTHMC to HMC on various examples and demonstrate its superiority in terms of effective sample sizes. 

\section{Motivation and geometric theory behind GTHMC}
\label{sec:GTHMC_and_motivation}
	We begin this section with a brief review of RMHMC. We then discuss why HMC variants in general perform poorly on multimodal target distributions, which leads to a simple motivation for GTHMC defined in Section~\ref{sec:GTHMC_simple_motivation}. In the subsequent subsections, we develop more precise geometric theory behind GTHMC.

\subsection{Hamiltonian dynamics and RMHMC}
	Given a parameter of interest $\btheta$ with an (unnormalized) probability distribution $\pi(\btheta)$, RMHMC defines an augmented target distribution 
	\begin{equation} \label{eq:rmhmc_target}
	\pi(\btheta, \p)
		\propto \pi(\btheta) \thinspace \mathcal{N}(\p; \mathbf{0}, \G(\btheta))
	\end{equation}
	where $\mathcal{N}(\p; \mathbf{0}, \G(\btheta))$ denotes a probability density function of a centred multivariate Gaussian with a covariance matrix $\G(\btheta)$, sometimes called a \textit{mass matrix}. The corresponding Hamiltonian is defined as the negative logarithm of the joint target distribution (up to an additive constant):
	\begin{equation} \label{eq:rmhmc_hamiltonian}
	H(\btheta,\p) = - \log \pi(\btheta) + \log \Abs{\G(\btheta)}^{1/2} + \frac12 \p^T \G^{-1}(\btheta) \p
	\end{equation}
	RMHMC generates a proposal by simulating \textit{Hamiltonian dynamics}, where the evolution of the state $(\btheta, \p)$ is governed by a differential equation known as \textit{Hamilton's equations}:
	\begin{equation} \label{eq:Hamilton}
	\begin{aligned}
	\Deriv[\btheta]{t}
		&= \nabla_{\p} H(\btheta, \p), \quad
	\Deriv[\p]{t}
		= - \nabla_{\btheta} H(\btheta, \p)
	\end{aligned}
	\end{equation}
	Let $\F_\tau$ denote a solution operator of \eqref{eq:Hamilton} i.e.\ $\F_\tau(\btheta, \p) = (\btheta(\tau), \p(\tau))$ where $\left\{ \big(\btheta(t),\p(t) \big) \right\}_{t=0}^\tau$ is the solution of \eqref{eq:Hamilton} with the initial condition $(\btheta(0), \p(0)) = (\btheta, \p)$. Hamiltonian dynamics and its solution operator is \textit{reversible} in the sense that $(\R \circ \F_\tau)^{-1} = \R \circ \F_\tau$ where $\R$ is a momentum flip operator $\R(\btheta, \p) = (\btheta, - \p)$. Hamiltonian dynamics also satisfies the \textit{conservation of energy} property: $H(\btheta(t), \p(t)) = H(\btheta_0, \p_0)$ for all $t \in \Real$
	
	For the purpose of our discussion here, let us suppose that Hamilton's equation as above can be solved exactly. In this idealized situation and in its basic form, RMHMC works as follows. In the algorithm description below, a \textit{path length} $\tau$ is a fixed tuning parameter.
	
	\begin{algorithm}[RMHMC w/o numerical approximation]
	\label{alg:exact_rmhmc}
	RMHMC samples from the distribution \eqref{eq:rmhmc_target} by repeating the following steps:
	\begin{enumerate}[1)]
	\item Given the current state $(\btheta, \p)$, re-sample the momentum $\p | \btheta \sim \mathcal{N}(\p; \mathbf{0}, \G(\btheta))$.
	\item Propose $(\btheta^*, \p^*) = (\btheta(\tau),-\p(\tau))$, where $\left\{ \big(\btheta(t),\p(t) \big) \right\}_{t=0}^\tau$ is the solution of \eqref{eq:Hamilton} with the initial condition $(\btheta(0), \p(0)) = (\btheta, \p)$ and a Hamiltonian defined as in \eqref{eq:rmhmc_hamiltonian}. 
	\item Accept the proposal with probability 1. (The acceptance probability is 1 in the absence of the approximation error in a numerical solution \eqref{eq:Hamilton}.) 
	\end{enumerate}
	\end{algorithm}
	RMHMC recovers a familiar HMC \citep{duane87, neal10} when the mass matrix is independent of the position variable. See \cite{neal10} and \cite{girolami11} for more detailed presentations on HMC and RMHMC.

\newcommand{\m}{\scalebox{0.75}[1.0]{\(-\)}}
\subsection{Multi-modality and Conservation of Energy}
\label{subsec:HMC_multi_modality}
	We now explain how existing HMC variants suffer from multimodality in the target distribution. For simplicity we consider the basic version of (Riemann manifold) HMC as in Algorithm~\ref{alg:exact_rmhmc} with a constant mass matrix $\M$, but the following analysis applies equally to the other HMC variants. 
	
	It is useful to consider a Hamiltonian as a sum of potential energy $U(\btheta) = - \log \pi(\btheta)$ and kinetic energy $K(\p) = \frac12 \p^T \M^{-1} \p$ (up to an additive constant). The \textit{energy barrier} with respect to a potential energy function $U$ from a position $\btheta_1$ to $\btheta_2$ is the smallest possible energy increase along a continuous path from $\btheta_1$ to $\btheta_2$:
	\begin{equation} \label{eq:energy_barrier}
	B(\btheta_1, \btheta_2; U) :=
		\inf_{\bm{\gamma} \in C^0} \Big\{ \max_{0 \leq t \leq 1} U(\bm{\gamma}(t)) - U(\btheta_1) \ \Big| \thinspace \bm{\gamma}(0) = \btheta_1 \text{ and } \bm{\gamma}(1) = \btheta_2 \Big\}
	\end{equation}	 
	where $C^0$ denotes a class of continuous functions. The quantity $B(\btheta_1, \btheta_2 ; U)$ is the minimum amount of kinetic energy from Step 1 of Algorithm~\ref{alg:exact_rmhmc} needed for HMC to reach $\btheta_2$ from $\btheta_1$ in a single iteration; to see this, notice that a trajectory of Hamiltonian dynamics satisfies the following relation due to the conservation of energy:
	\begin{equation} \label{eq:energy_transfer}
	U(\btheta(t)) - U(\btheta_0)
		=  K(\p_0) - K(\p(t))
		\leq K(\p_0)
	\end{equation}
	where $\btheta(0) = \btheta_0$ and $\p(0) = \p_0$. The quantity $K(\p_0) - K(\p(t))$ is the amount of energy transferred from kinetic to potential at time $t$. Since the increase in the potential energy along a trajectory is upper bounded by $K(\p_0)$, the trajectory generated in Step 2 of HMC will not be able to reach $\btheta_2$ if the kinetic energy from Step 1 is smaller than $B(\btheta_1, \btheta_2 ; U)$. This is problematic for HMC as the energy barrier $B(\btheta_1, \btheta_2 ; U)$ would be high if $\btheta_1$ and $\btheta_2$ were two modes of $\pi(\btheta)$ with a region of low probability in between. To make things worse, there is no guarantee that a momentum variable with minimum required kinetic energy will actually generate a path between two modes. 
	
	
\subsection{Simple Motivation for GTHMC}
\label{sec:GTHMC_simple_motivation}
	We now define GTHMC and provide a simple motivation behind it.
	\begin{Defn}[GTHMC] \label{def:gthmc}
	GTHMC is a sub-class of RMHMC in which a metric $\G_T(\btheta)$ satisfies the following relation for $T > 1$
		\begin{equation} \label{eq:gthmc_def}
		\Abs{\G_T(\btheta)}^{1/2} \propto \pi(\btheta)^{1 - \frac{1}{T}}
		\end{equation}
	where $\Abs{\M}$ denotes the determinant of a matrix $\M$.
	\end{Defn}
	
	\noindent For GTHMC at temperature $T$, the Hamiltonian decomposes into a potential energy $U_T(\btheta)$ and kinetic energy $K_T(\btheta,\p)$ where
	\[ U_T(\btheta) = - \frac{1}{T} \log \pi(\btheta),
	\quad K_T(\btheta,\p) = \frac12 \p^T \G_T^{-1}(\btheta) \p \]
	and $K_T(\btheta, \p) \sim \chi_d^2 / 2 $ irrespective of $T$. As in the HMC setting \eqref{eq:energy_transfer}, the conservation of energy implies that
	\[ U_T(\btheta(t)) - U_T(\btheta_0)
		= K_T(\btheta_0,\p_0) - K_T(\btheta(t),\p(t)) 
	 \]
	where $(\btheta(t), \p(t))$ denotes a trajectory of the corresponding Hamiltonian dynamics with the initial condition $(\btheta_0, \p_0)$. Again, $K_T(\btheta_0,\p_0)$ is the maximum possible increase in the potential energy along the trajectory of Hamiltonian dynamics. Now notice that  $U_T(\btheta) = U_1(\btheta) / T$ and therefore we have
	\[ B(\btheta_1, \btheta_2 ; U_T)
		= \frac{1}{T} \thinspace B(\btheta_1, \btheta_2 ; U_1) \] 
	Hence, the energy barrier from $\btheta_1$ to $\btheta_2$ becomes lower as $T$ becomes large, requiring less kinetic energy for the trajectory to reach $\btheta_2$ from $\btheta_1$. As we discuss in more detail below (in particular, see the remark in Section~\ref{sec:theory_behind_GT}), a property similar to \eqref{eq:gthmc_def} is not only a convenient way but also a requirement to allow RMHMC to move from one mode to another in a small number of iterations.

\subsection{Geometric Intuition behind RMHMC}
\label{subsec:geom_intuition_behind_RMHMC}
	To further motivate GTHMC, we establish a theoretical result on RMHMC that provides a novel geometric intuition behind the algorithm. Our approach is to describe the Hamiltonian dynamics underlying RMHMC in terms of a more intuitive Newtonian dynamics on a manifold embedded in a Euclidean space. The required knowledge of Riemannian geometry is minimal and the supplemental appendix provides further background information.
	
	
	
\subsubsection{Newtonian Dynamics on a Manifold}
	We first review Newtonian dynamics on a Euclidean space, which can be considered as a special case of Hamiltonian dynamics when the mass matrix is proportional to the identity and the Hamiltonian takes the form $H(\btheta,\p) = U(\btheta) + \p^T \p / 2$ for a potential energy function $U(\btheta)$.
 In this case, the Hamilton's equation recovers Newtonian mechanics' description of the motion of a unit-mass particle in the potential energy field $U(\btheta)$: 
	\begin{align*}
	\Deriv[\btheta]{t}
		= \p, \quad
	\Deriv[\p]{t}
		= - \nabla U(\btheta)	
	\end{align*}
	The first equation simply expresses the fact that velocity is a time derivative of position. The second equation expresses Newton's second law; acceleration is proportional to force, the negative gradient of potential energy in our case. Borrowing from Neal (2010), Newtonian dynamics in two dimensions can be imagined as a motion of a frictionless puck that slides over a surface of height $U(\btheta)$. At the position $\btheta$, the puck experiences a force in the direction of greatest descent $-\nabla U(\btheta)$. If the surface is flat around $\btheta$ (i.e.\ $\nabla U \equiv \mathbf{0}$), the puck continues to move at a constant velocity in this area.

	Now consider a potential energy $\tilU(\tilbtheta)$ defined on a $d$-dimensional manifold $M \subset \Real[\tilde{d}]$ and let $T_{\tilbtheta} M \subset \Real[\tilde{d}]$ denote the tangent space of $M$ at $\tilbtheta$. As in a Euclidean space, Newtonian dynamics on a manifold describes the motion of a particle under the potential energy field $\tilU(\tilbtheta)$ driven in the direction of the greatest energy decrease, except that the particle is now constrained on a manifold $M$. Denoting the gradient of $\tilde{U}$ on $M$ by 
$\nabla^M \tilU(\tilbtheta)$, Newtonian dynamics on a manifold is defined as follows:
	\begin{Defn}[Newtonian dynamics on a manifold]
	\label{def:newtonian_dynamics_on_manifolds}
	A trajectory of Newtonian dynamics on a manifold $M$ under the potential energy field $\tilU(\tilbtheta)$ with an initial condition $\tilbtheta_0 \in M$ and $\tilp_0 \in T_{\tilbtheta_0} M$ is a unique solution $(\tilbtheta(t), \tilp(t)) \in M \times T_{\tilbtheta(t)} M$ of the differential equation
	\begin{align*}
	\Deriv[\tilbtheta]{t}
		= \tilp, \quad
	\Deriv[\tilp]{t}
		= - \nabla^M \tilU(\tilbtheta) 
	\end{align*}
	such that $\tilbtheta(0) = \tilbtheta_0$ and $\tilp(0) =  \tilp_0$. 
	\end{Defn}

\newcommand{\prodmap}{\g^{-1} \hspace{-.5ex} \times \hspace{-.2ex} \Diff \g^T}

\subsubsection{RMHMC in terms of Newtonian Dynamics on a Manifold}
\label{sec:RMHMC_as_Newtonain_dynamics_on_manif}
	Just as in Euclidean space, we can run HMC on a manifold by solving the Newtonian dynamics to generate samples from a given target distribution. We introduce HMC on a manifold as a theoretical tool to enhance our understanding of RMHMC, so we do not concern ourselves with how the Newtonian dynamics on a manifold may be numerically approximated. 
	
	\begin{Defn}[HMC on a manifold]
	\label{def:HMC_on_manif}
	 Given a pdf $\tilde{\pi}(\tilbtheta)$ on a manifold $M$, the following procedures generate a Markov chain $\{(\tilbtheta^{(i)}, \tilp^{(i)}) \}_{i=1}^{\infty}$ whose stationary distribution has the marginal $\tilde{\pi}(\tilbtheta)$. 1.\ Sample $\tilp^{(i)}$ from the standard Gaussian on $T_{\tilbtheta^{(i)}}M$. 2.\ Set $(\tilbtheta^{(i+1)}, \tilp^{(i+1)}) = (\tilbtheta^{(i)}(\tau), -\tilp^{(i)}(\tau))$ where $\{(\tilbtheta^{(i)}(t), \tilp^{(i)}(t))\}_t$ is a solution of the Newtonian dynamics as in Definition~\ref{def:newtonian_dynamics_on_manifolds} with the potential energy $\tilde{U}(\tilbtheta) = - \log \tilde{\pi}(\tilbtheta)$ and the initial condition $(\tilbtheta^{(i)}, \tilp^{(i)})$.	
	\end{Defn}


\newcommand{\bSigma}{\bm{\Sigma}}


We now state our main theoretical result. Theorem~\ref{thm:RMHMC_as_reparam_HMC} below provides valuable insights into the behaviors of RMHMC trajectories that are hard to predict otherwise.

	
	\begin{Theorem}[RMHMC as reparametrization of HMC]
	\label{thm:RMHMC_as_reparam_HMC}
	Given a pdf $\pi(\btheta)$ on $\Real[d]$, let $\tilde{\pi}$ denote the pdf of a random variable $\g(\bm{\Theta})$ for $\bm{\Theta} \sim \pi$. For the initial input $\btheta_0 \in \Real[d]$ and $\tilbtheta_0 = \g(\btheta_0)$, let $\big\{(\tilbtheta^{(i)}, \tilp^{(i)})\big\}_{i=0}^N$ be a Markov chain generated by HMC on a manifold as in Definition~\ref{def:HMC_on_manif}. Then a Markov chain $\big\{\g^{-1} \hspace{-.5ex} \times \hspace{-.2ex} \Diff \g^T (\tilbtheta^{(i)}, \tilp^{(i)})\big\}_{i=0}^N$ on $\Real[d] \times \Real[d]$ defined through the map
	\begin{equation*}
	\prodmap (\tilbtheta, \tilp)
		:= \left( \g^{-1}(\tilbtheta), \, \Diff\g_{\g^{-1}(\tilbtheta)}^T \tilp \right)
	\end{equation*}
	has the same distribution as the Markov chain generated by running RMHMC on $\Real[d]$ with a metric $\G(\btheta) = \Diff\g_{\btheta}^T \Diff\g_{\btheta}$.
	\end{Theorem}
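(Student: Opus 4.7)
The plan is to verify that the map $\Phi(\tilbtheta,\tilp) := \prodmap(\tilbtheta,\tilp)$ intertwines the two algorithms in two independent senses: (a) it carries the momentum-resampling distribution of HMC on $M$ to the momentum-resampling distribution of RMHMC on $\Real[d]$ (conditionally on position), and (b) it conjugates the Newtonian flow on $M$ to the Hamiltonian flow on $\Real[d]$ with Hamiltonian \eqref{eq:rmhmc_hamiltonian}. Since $\Phi$ also commutes with the momentum flip ($\Diff\g^T(-\tilp)=-\Diff\g^T\tilp$), the two MH kernels become identical step by step, and the claim then follows by induction on $i$ from the matched initial condition $\tilbtheta_0 = \g(\btheta_0)$.

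For (a), I would use the change-of-variables formula on $M$ with respect to Hausdorff measure: the volume element pulls back to $\sqrt{\Abs{\G(\btheta)}}\, d\btheta$, so the definition $\bm\Theta \sim \pi$ and $\g(\bm\Theta) \sim \tilde\pi$ yields $\tilde\pi(\g(\btheta)) = \pi(\btheta)/\sqrt{\Abs{\G(\btheta)}}$. For the momentum, the restriction of $\Diff\g_\btheta^T$ to $T_{\g(\btheta)}M$ is a bijection onto $\Real[d]$ whose Jacobian, relative to the ambient inner product on $T_{\g(\btheta)}M$ and Lebesgue measure on $\Real[d]$, equals $\sqrt{\Abs{\G(\btheta)}}$. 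Writing $\tilp = \Diff\g_\btheta \v$ gives $\Norm{\tilp}^2 = \v^T \G \v$ and $\p := \Diff\g_\btheta^T \tilp = \G \v$, so that $\Norm{\tilp}^2 = \p^T \G^{-1} \p$. Pushing the standard Gaussian on $T_{\g(\btheta)}M$ through $\tilp \mapsto \Diff\g_\btheta^T\tilp$ therefore produces $\Normal[{\mathbf 0}]{\G(\btheta)}$, matching the RMHMC conditional.

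For (b), I would work via the Lagrangian formulation. In the chart $\btheta \mapsto \g(\btheta)$, a curve $\tilbtheta(t) = \g(\btheta(t))$ has velocity $\tilp = \Diff\g\,\v$ with $\v = d\btheta/dt$, so the manifold Lagrangian pulls back to $L(\btheta,\v) = \tfrac12 \v^T \G(\btheta) \v - \tilU(\g(\btheta))$. The Legendre transform gives conjugate momentum $\p = \G(\btheta)\v$, which agrees with the second component of $\Phi$ because $\Diff\g^T \tilp = \Diff\g^T \Diff\g\,\v = \G\v$, and Hamiltonian $\tfrac12 \p^T \G^{-1}\p + \tilU(\g(\btheta))$. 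Substituting $\tilU(\g(\btheta)) = -\log\tilde\pi(\g(\btheta)) = -\log\pi(\btheta) + \tfrac12 \log\Abs{\G(\btheta)}$ from (a) recovers \eqref{eq:rmhmc_hamiltonian} verbatim, so the Euler--Lagrange equations for $L$---which are the coordinate form of the Newtonian equations of Definition~\ref{def:newtonian_dynamics_on_manifolds}---transform into Hamilton's equations \eqref{eq:Hamilton} under $\Phi^{-1}$.

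The hard part is the bookkeeping around momentum: Newtonian dynamics lives naturally on $TM \subset \Real[\tilde d] \times \Real[\tilde d]$ while RMHMC lives on $\Real[d] \times \Real[d]$, and the theorem hinges on the fact that the ``right'' identification is precisely the lowering map $\tilp \mapsto \Diff\g^T \tilp$ built into $\Phi$. One must also check that the manifold gradient $\nabla^M \tilU$, defined in Definition~\ref{def:newtonian_dynamics_on_manifolds} as a projection onto $T_{\tilbtheta}M$, transforms correctly to $-\nabla_\btheta H$ under the pullback---this is a clean consequence of the Lagrangian computation above, since projection onto $T_{\tilbtheta}M$ and differentiation of a function of $\btheta$ along curves in $\Real[d]$ are the two sides of the standard Riemannian gradient identity. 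With these identifications in hand, (a) and (b) combine to make $\Phi$ a measure-preserving conjugacy of one full iteration of the two algorithms, completing the proof.
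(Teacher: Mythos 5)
Your proposal is correct in substance but takes a genuinely different route from the paper's. The paper proves the theorem by reducing it to Theorem~\ref{thm:Newton_Hamilton_bijection} and then verifying Hamilton's equations by brute-force differentiation: it computes $\Deriv[p_i]{t}$ directly from $p_i = \IP{\PDeriv[\g]{\theta_i}}{\tilp}$ via the product rule, obtaining the metric-derivative term $\frac12 \p^T \G^{-1} \partial_{\theta_i}\G\, \G^{-1}\p$ explicitly and handling the potential term through Lemma~\ref{lem:gradient_under_transformation} (the transformation rule for gradients plus the change-of-variables identity $\log\tilde{\pi} = \log\pi - \frac12\log\Abs{\G}$). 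You instead route the dynamics through the Lagrangian $L(\btheta,\v) = \frac12\v^T\G\v - \tilU(\g(\btheta))$ and a Legendre transform, which is the classical-mechanics derivation of why the kinetic term $\frac12\p^T\G^{-1}\p$ and the conjugate momentum $\p = \G\v = \Diff\g^T\tilp$ appear; this is cleaner and more conceptual, and it makes the identification built into $\prodmap$ look inevitable rather than fortuitous. What the paper's computation buys is that it never has to invoke the equivalence between Definition~\ref{def:newtonian_dynamics_on_manifolds} and the Euler--Lagrange flow: that equivalence is exactly the point where the ambiguity in $\Deriv[\tilp]{t}$ for a tangent-space-constrained momentum (covariant derivative versus ambient derivative, i.e.\ D'Alembert's principle) gets resolved, and it is the source of the Christoffel-type term. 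You assert this step rather than prove it; it is a standard fact, but in a self-contained writeup you would need to either verify it or cite it, since it is precisely the content that the paper's direct calculation of \eqref{eq:mom_unsimplified_first_term} supplies. On the other hand, your part (a) --- checking that the standard Gaussian on $T_{\tilbtheta}M$ pushes forward under $\tilp\mapsto\Diff\g_\btheta^T\tilp$ to $\Normal{\G(\btheta)}$, with the Jacobian factor $\Abs{\G}^{1/2}$ --- is an ingredient the paper leaves implicit when it passes from Theorem~\ref{thm:Newton_Hamilton_bijection} to Theorem~\ref{thm:RMHMC_as_reparam_HMC}, so your argument is actually more complete on the distributional side of the equivalence.
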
 
	Theorem~\ref{thm:RMHMC_as_reparam_HMC} is a consequence of the fact that $\prodmap$ bijectively maps Newtonian dynamics on a manifold onto the corresponding Hamiltonian dynamics on $\Real[d]$, formally stated as follows:
	
	\begin{Theorem}
	\label{thm:Newton_Hamilton_bijection}
	If $(\tilde{\btheta}(t), \tilde{\p}(t))$ is a solution of the Newtonian dynamics on $M$ with a potential energy $\tilU(\tilde{\btheta}) = -\log \tilde{\pi}(\tilde{\btheta})$, then $(\btheta(t),\p(t)) = \prodmap (\tilde{\btheta}(t), \tilde{\p}(t))$ is a solution of Hamiltonian dynamics in $\Real[d]$ corresponding to the Hamiltonian \eqref{eq:rmhmc_hamiltonian} with $\G(\btheta) = \Diff\g_{\btheta}^T \Diff\g_{\btheta}$.
	\end{Theorem}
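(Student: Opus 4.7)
The strategy is to treat $\g: \Real[d] \to M$ as a global chart and pull Newtonian dynamics on $M$ back to $\Real[d]$ via the Lagrangian formalism, from which the Hamiltonian \eqref{eq:rmhmc_hamiltonian} will emerge by a Legendre transform. Concretely, setting $\btheta(t) = \g^{-1}(\tilbtheta(t))$ and $\p(t) = \Diff\g_{\btheta(t)}^T \tilp(t)$, I would verify that this pair satisfies Hamilton's equations for the RMHMC Hamiltonian with metric $\G = \Diff\g^T \Diff\g$.

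The key preliminary is a change-of-variables for densities. Since $\g$ carries Lebesgue measure on $\Real[d]$ to $|\G(\btheta)|^{1/2} \diff\btheta$ on $M$ relative to the induced Riemannian volume, the pushforward density satisfies $\tilde{\pi}(\g(\btheta)) = \pi(\btheta) / |\G(\btheta)|^{1/2}$, so
\begin{equation*}
\tilU(\g(\btheta)) = -\log \pi(\btheta) + \tfrac12 \log |\G(\btheta)|.
\end{equation*}
Pulling back the manifold Lagrangian $L(\tilbtheta, \dot{\tilbtheta}) = \tfrac12 \|\dot{\tilbtheta}\|^2 - \tilU(\tilbtheta)$ via $\dot{\tilbtheta} = \Diff\g_{\btheta} \dot{\btheta}$ then yields
\begin{equation*}
L(\btheta, \dot{\btheta}) = \tfrac12 \dot{\btheta}^T \G(\btheta) \dot{\btheta} + \log \pi(\btheta) - \tfrac12 \log |\G(\btheta)|,
\end{equation*}
whose conjugate momentum is $\p = \partial L / \partial \dot{\btheta} = \G(\btheta) \dot{\btheta} = \Diff\g_\btheta^T \tilp$, which is exactly the $\p$-component of $\prodmap$. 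A direct Legendre transform then gives $H(\btheta,\p) = \tfrac12 \p^T \G^{-1}(\btheta) \p - \log \pi(\btheta) + \tfrac12 \log |\G(\btheta)|$, matching \eqref{eq:rmhmc_hamiltonian} on the nose.

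The remaining step is to link the three flows. Hamilton's equations for $H$ are equivalent to the Euler-Lagrange equations for its Legendre dual $L$, and since $\g$ is a diffeomorphism these in turn agree with the Euler-Lagrange equations of the original intrinsic Lagrangian on $M$. The main obstacle will be identifying the extrinsic-form dynamics of Definition~\ref{def:newtonian_dynamics_on_manifolds} with this intrinsic Euler-Lagrange system: one must read $\diff \tilp / \diff t = -\nabla^M \tilU$ so that $(\tilbtheta, \tilp) \in TM$ is preserved, implicitly incorporating a d'Alembert-type normal constraint force, and then invoke the classical fact that such a constrained Newton system coincides with the Euler-Lagrange flow on $M$. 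This is a standard consequence of Riemannian geometry whose justification I would relegate to the supplemental appendix.
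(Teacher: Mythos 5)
Your proposal is correct, but it reaches the result by a genuinely different route than the paper. The paper's proof is a bare-hands verification of Hamilton's equations: it writes $p_i = \IP{\partial\g/\partial\theta_i}{\tilp}$, differentiates in $t$ with the product rule, and converts $\nabla_{\tilbtheta}\log\tilde{\pi}$ into $\btheta$-coordinates via a small change-of-variables lemma, recovering $\diff p_i/\diff t = -\partial H/\partial\theta_i$ term by term. You instead pull back the Lagrangian $\tfrac12\Norm{\dot{\tilbtheta}}^2 - \tilU$ and Legendre-transform; both arguments pivot on the same identity $\tilU\circ\g = -\log\pi + \tfrac12\log\Abs{\G}$, which is exactly the paper's Lemma~\ref{lem:gradient_under_transformation} in disguise. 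Your route is more conceptual -- it explains \emph{why} the $\tfrac12\log\Abs{\G}$ term appears in the Hamiltonian and why the momentum component of $\prodmap$ must be $\Diff\g_{\btheta}^T\tilp$ (it is the conjugate momentum $\partial L/\partial\dot{\btheta}$) -- but it defers the one genuinely geometric step, namely identifying the extrinsic system of Definition~\ref{def:newtonian_dynamics_on_manifolds} with the intrinsic Euler--Lagrange flow on $M$, to a classical d'Alembert-type fact. That deferral is legitimate (the fact is standard for a tangential applied force on an isometrically embedded submanifold), but be aware that it is where the real content lives: the paper's direct computation dissolves the same difficulty automatically, because $\diff\tilp/\diff t$ is only ever paired against the tangent vectors $\partial\g/\partial\theta_i$, so any normal (constraint-force) component of the extrinsic equation is annihilated without ever being discussed. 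If you fill in, or properly cite, that reconciliation, your argument is complete.
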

	
	Theorems~\ref{thm:RMHMC_as_reparam_HMC} and \ref{thm:Newton_Hamilton_bijection} in essence state that, up to numerical approximation errors in simulating trajectories, running RMHMC to sample from a parameter space $\btheta \in \Real[d]$ is equivalent to running HMC to sample from the reparametrization $\tilbtheta = \g(\btheta) \in M$. This means that the metric $\G(\btheta)$ should be chosen so that the reparametrization defines a well-conditioned distribution from which HMC can sample efficiently. In the special case when $\g = \widehat{\bSigma}^{-1/2}$ is a linear operator and $\G = \widehat{\bSigma}^{-1}$, Theorem~\ref{thm:RMHMC_as_reparam_HMC} recovers a well-known fact on the effect of using a non-identity mass matrix in HMC \citep{neal10}. The Langevin dynamics analogue of Theorem~\ref{thm:RMHMC_as_reparam_HMC} can also be established: see Supplement Section~\ref{app:geometric_theory_for_mmala}.
	
\begin{rmk}
	Theorem~\ref{thm:RMHMC_as_reparam_HMC} and \ref{thm:Newton_Hamilton_bijection} start with a reparametrization $\g$ and identify the corresponding Riemannian metric as $\G(\btheta) = \Diff \g_\btheta^T \Diff \g_\btheta$. Nash embedding theorem \citep{nash54} tells us that the construction can go in the other direction as well; given a metric $\G(\btheta)$, there is a corresponding (local) reparametrization $\g$ so that HMC in the space $\g(\btheta)$ is equivalent to RMHMC in the space $\btheta$ with a metric $\G(\btheta)$. 
\end{rmk}
	

\subsection{Theory behind geometric tempering}
\label{sec:theory_behind_GT}	
	Tempering methods are motivated by  the fact that a distribution $\pi(\btheta)^{1/T} / Z_T$, where $Z_T$ is a normalizing constant, has less severe multi-modality than $\pi(\btheta)$ for $T > 1$ \citep{earl05, geyer95, marinari92, neal01}. The main challenge is to relate the samples from the tempered distribution $\pi(\btheta)^{1/T} / Z_T$ back to the original target $\pi(\btheta)$. 
	
	GTHMC works by implicitly sampling from a transformed variable $\tilbtheta$ such that the transformation $\btheta \to \tilbtheta$ alleviates the multi-modality of $\pi(\btheta)$. This implicit transformation is achieved by appropriately modifying Hamiltonian dynamics for the parameter $\btheta$, and therefore the samples generated from GTHMC retain the target distribution $\pi(\btheta)$. Theorem~\ref{thm:RMHMC_as_reparam_HMC} implies that the use of a metric with a property $\Abs{\G_T(\btheta)}^{1/2} = \pi(\btheta)^{(1-\frac{1}{T})}$ corresponds to an implicit transformation $\tilbtheta = \g(\btheta)$ through a map $\g$ such that $\Abs{\Diff\g_{\btheta}^T \Diff\g_{\btheta}}^{1/2} = \pi(\btheta)^{(1-\frac{1}{T})}$. This means that the transformed variable $\tilbtheta$ would have the distribution
	\[ \tilde{\pi}(\tilbtheta) 
		\propto \pi \circ g^{-1}(\tilbtheta)^{1/T} \]
	by virtue of the (generalized) change of variable formula $\tilde{\pi}\big( \tilde{\btheta} \big) = \left| \Diff \g_\btheta^T \Diff \g_\btheta \right|^{- 1/2} \pi(\btheta)$ \citep{federer69}. This is how GTHMC effectively lowers the energy barriers among the modes of $\pi(\btheta)$ by a factor of $1/T$. Geometric tempering does not compete with existing tempering methods and in fact can be combined with them; see Section~\ref{sec:discussion}.
	
	\begin{rmk}
	The implicit reparametrization $\g: \btheta \to \tilbtheta$ under RMHMC has no effects on energy barriers among the modes if a metric $\G(\btheta)$ has a constant volume factor \mbox{$\Abs{\G(\btheta)}^{1/2} = c$}. More generally, the difference in the potential energy $\tilde{U}$ between two positions $\tilbtheta_1$ and $\tilbtheta_2$ is given by
	\[ \log \frac{\tilde{\pi}(\tilbtheta_2)}{\tilde{\pi}(\tilbtheta_1)} 
 		= \log \frac{\pi(\btheta_2)}{\pi(\btheta_1)} - \frac12 \log  \frac{\Abs{\G(\btheta_2)}}{\Abs{\G(\btheta_1)}} \]
 	where $\btheta_i = \g^{-1}(\tilbtheta_i)$. The above equation shows that RMHMC has a measurable effect on the energy difference between $\tilbtheta_1$ and $\tilbtheta_2$ only if
 	\[ \frac{\Abs{\G(\btheta_2)}}{\Abs{\G(\btheta_1)}} 
 		\underset{\sim}{\propto} \left( \frac{\pi(\btheta_2)}{\pi(\btheta_1)} \right)^\alpha  
 		\quad \text{ for } \alpha > 0. \]
Thus any metric designed to promote the movements among the modes must locally have a property like $\Abs{\G(\btheta)}^{1/2} \propto \pi(\btheta)^{(1-\frac{1}{T})}$ for $T > 1$, the defining characteristic of GTHMC.
	\end{rmk}

\section{Concrete examples of GTHMC}
\label{sec:ITHMC_and_DTHMC}
	
	We have only assumed $\Abs{\G(\btheta)}^{1/2} \propto \pi(\btheta)^{(1-\frac{1}{T})}$ in our development of GTHMC, leaving substantial flexibility in the choice of metric. We propose two computationally convenient variants of GTHMC, illustrating how the choice of metric affects performance.  Simulation results are presented in  Section~\ref{sec:simulation}, preceded by discussion on how to efficiently approximate the dynamics underlying GTHMC in Section~\ref{sec:variable_step_integrator}.	
	
\subsection{Isometrically tempered HMC (ITHMC)}
\label{sec:ithmc}
	The choice $\G(\btheta) = g(\btheta) \, \I$ with $g(\btheta) \propto \pi(\btheta)^{\frac{2}{d} (1-\frac{1}{T})}$ is arguably the simplest way to satisfy the requirement $\Abs{\G(\btheta)}^{1/2} \propto
	 \pi(\btheta)^{(1-\frac{1}{T})}$. This metric modifies the local distance of the parameter space uniformly in all the directions, and therefore we call GTHMC with this choice of metric \textit{isometrically tempered HMC} (ITHMC). 
	 
\subsection{Directionally tempered HMC (DTHMC)}
\label{sec:dthmc}
	As mentioned in Section~\ref{subsec:HMC_multi_modality}, an iteration of HMC with sufficient kinetic energy to overcome energy barriers does not guarantee a transition from one mode to another. The transition can be infrequent even for GTHMC when a randomly generated trajectory from one mode tends not to travel in the direction of another. For this reason, a non-uniform distortion of the local distances can improve efficiency in certain situations (see Section~\ref{sec:GTHMC_illustration}).  We let \textit{directionally tempered HMC} (DTHMC) refer to a version of GTHMC in which the local distance in a particular direction is modified differently from the other directions.  More precisely, we set 
	\begin{align} \label{eq:dthmc_metric}
	&\G(\btheta) = \gpar(\btheta) \, \u \u^T + \gort(\btheta) \, (\I - \u \u^T) 
	\end{align}
where $\gpar(\btheta) = \pi(\btheta)^{2 \gamma (1-\frac{1}{T})}$, 
$\gort(\btheta) = \pi(\btheta)^{2 \frac{1-\gamma}{d-1} (1-\frac{1}{T})}$,
 and $d^{-1} < \gamma \leq 1$. This metric modifies the distance only in the direction of $\u$ when $\gamma = 1$ while it coincides with the metric for ITHMC when $\gamma = d^{-1}$. This kind of metric is appropriate when it is known that the multi-modality is more severe in a particular direction.

	The metric proposed in \cite{lan14} has an  apparent similarity to \eqref{eq:dthmc_metric} but lacks the crucial property $\Abs{\G(\btheta)}^{1/2} \propto \pi(\btheta) ^{(1-\frac{1}{T})}$. Some degree of geometric tempering is achieved in their examples as a result of substantial manual tuning of the metrics based on the knowledge of mode locations. Even then, they have to resort to ad hoc additions of drifts to their dynamics to induce more frequent transitions among the modes. GTHMC provides more effective geometric tempering without such an extensive manual tuning.
	
 	
\subsection{Illustration of trajectories generated by GTHMC}
\label{sec:GTHMC_illustration}
	We simulate some trajectories of HMC, ITHMC and DTHMC to illustrate the effect of geometric tempering as well as the difference between isometric and directional tempering. We construct a bi-modal target distribution $\pi(\cdot)$ as a mixture of 2-d standard Gaussians centered at $(-4,0)$ and $(4,0)$. For each of the algorithms, trajectories are simulated for $t = 3$ from a high density region near $(-4,0)$, all having the same initial kinetic energy $K(\btheta_0,\p_0) = \p_0^T \G(\btheta_0)^{-1} \p_0 \, / \, 2 = 0.8$. For DTHMC, the tempering direction is along the $x$-axis (i.e.\ $\u = (1,0)$ in Equation~\eqref{eq:dthmc_metric}) and the temperature is set at $T = 15$ for both DTHMC and ITHMC. Between the two modes, the energy barrier with respect to the potential energy $U = - \log(\pi)$ is roughly given by $U(0,0) - U(-4,0) \approx 7.3$, so the geometrically tempered trajectories have more than enough kinetic energy to overcome the barrier as $K(\btheta_0,\p_0) = 0.8 > 7.3 \, / \, T$.
	
	Figure~\ref{fig:comparison_bet_HMC_DTHMC_ITHMC} shows the trajectories generated as described above. For HMC and DTHMC, the trajectories of the same color are meant to be directly comparable as they have exactly the same value of $(\btheta_0, \G(\btheta_0)^{-1/2} \p_0)$ (recall that $\G(\btheta)^{-1/2} \p | \btheta \sim \text{Normal}(\mathbf{0}, \I)$ irrespective of the choice of a metric). The ITHMC trajectories were given similar but not necessarily comparable values of $(\btheta_0, \G(\btheta_0)^{-1/2} \p_0)$; the initial conditions were instead chosen to better highlight the difference between the isometric and directional tempering. 
	
	As can be seen, none of the HMC trajectories have sufficient (total) energy to reach the other mode and consequently are trapped near the left mode. On the other hand, the DTHMC trajectories can easily reach the other mode with high probability. The ITHMC trajectories also have enough energy to travel through the low probability and clearly improve on HMC, but are not as successful as DTHMC in locating the other mode. In general, geometrically tempered trajectories tend to drift toward regions of lower probability as the distances to those regions are closer than to regions of higher probability under the metric of the form \eqref{eq:dthmc_metric}. Benefits of geometric tempering therefore are greater if done in particular directions of interest to limit the exploration of irrelevant regions.
	
	Along each of the trajectories, asterisk signs are placed at $\{\btheta(t_i)\}_{i=0}^n$, where $\{t_i\}_{i=0}^n$ partitions $[0, t]$ into $n$ equally spaced intervals.  This is done to demonstrate how the velocity of a trajectory changes along its path. The tempered trajectories travel through low probability density regions in a relatively small amount of time, a property we discuss further in Section~\ref{subsec:GTHMC_velocity}.
	
	The cyan coloured DTHMC trajectory deserves some attention. The large oscillation in the tempered direction can be understood as follows in view of Theorem~\ref{thm:RMHMC_as_reparam_HMC}: 
	a map $\g: \Real[2] \to M$ corresponding to the DTHMC metric heavily compresses the distance along the $x$-axis in low probability regions. Therefore, a small oscillation of a trajectory on $M$ manifests as a large oscillation in the original parameter space $\Real[2]$. This phenomenon does not negatively affect the mixing of DTHMC but it does increase the computational cost; see our simulation results in Section~\ref{sec:simulation}. 
	

	\begin{figure}[!htb]
	\begin{minipage}{1.06\textwidth}
	\hspace{-6ex}
    \centering
    	\subfigure[Trajectories of HMC]{
	        \includegraphics[width=.23\textwidth]{./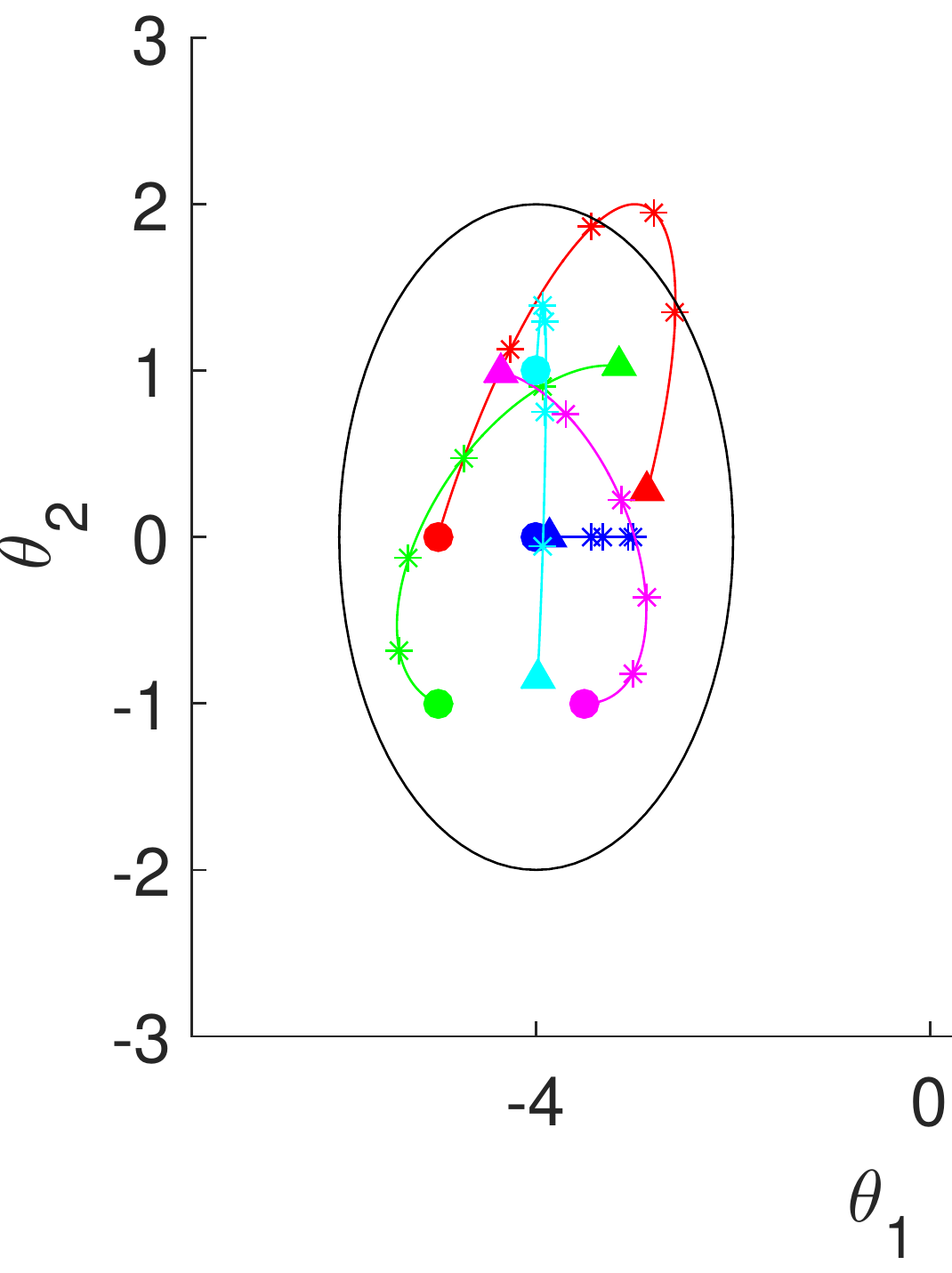} } \hspace{-3ex}
    	\subfigure[Trajectories of DTHMC]{
	        \includegraphics[width=.36\textwidth]{./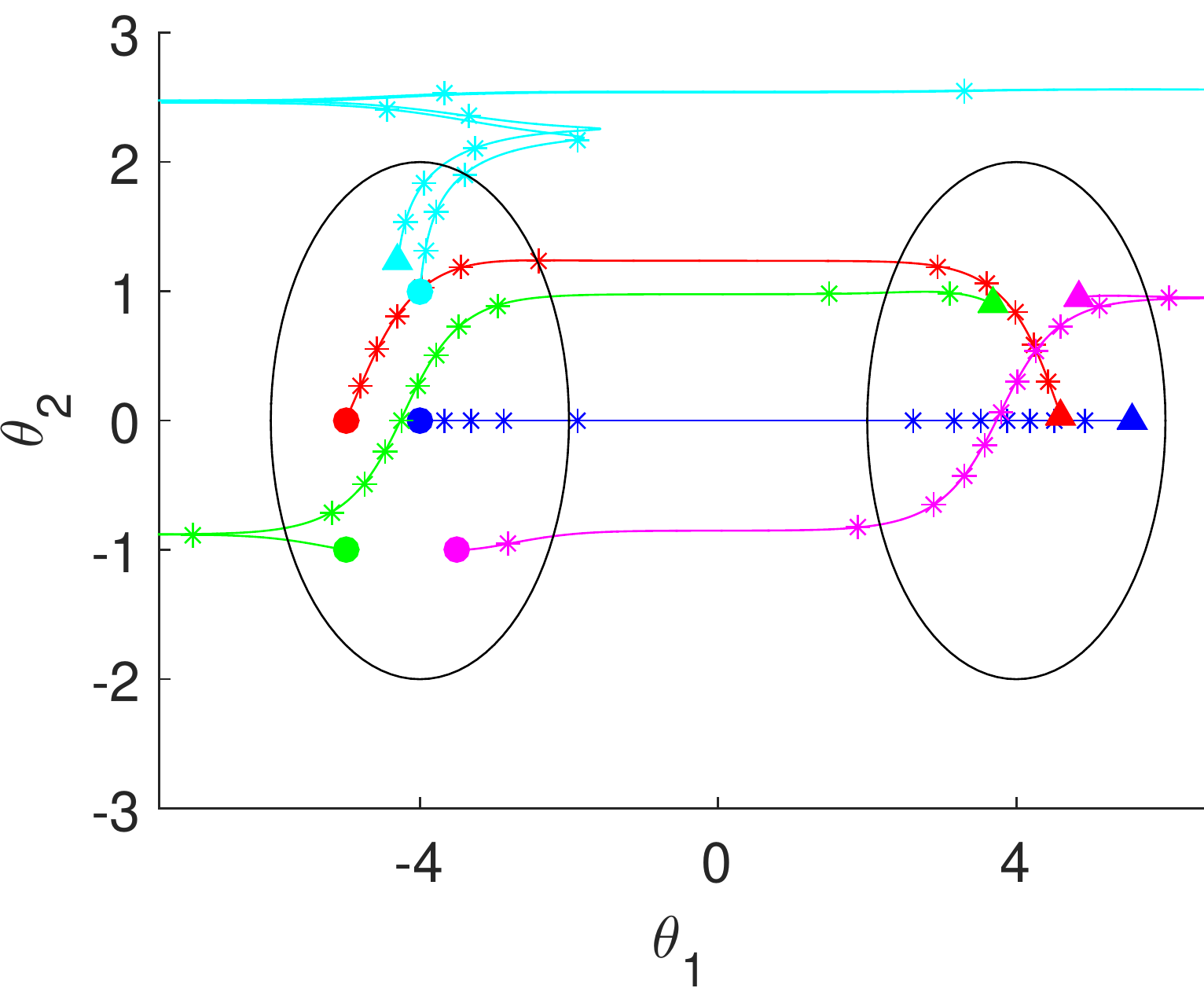} }
	  	\hspace{-2ex}
	    \subfigure[Trajectories of ITHMC]{
	        \includegraphics[width=.36\textwidth]{./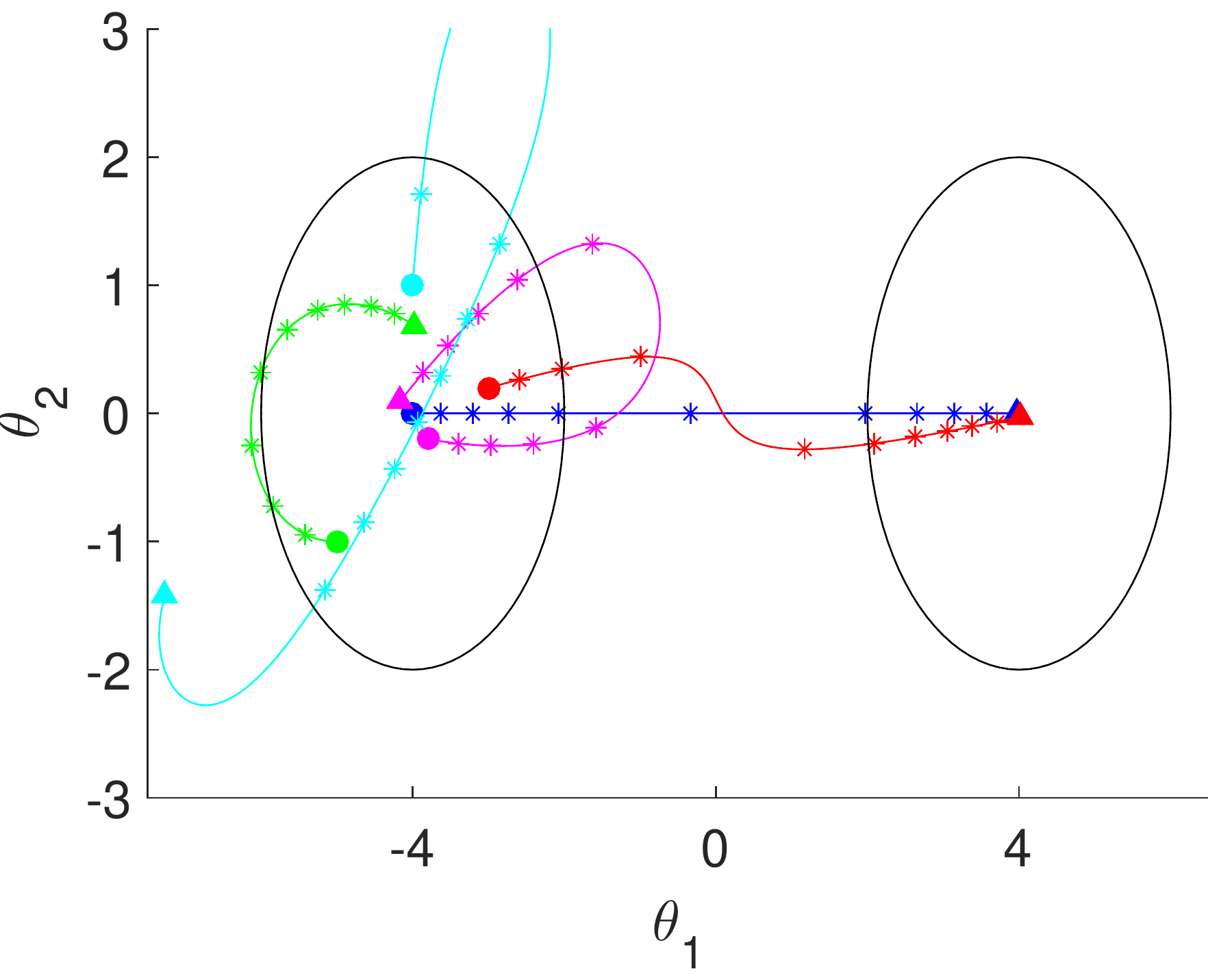} }
	\end{minipage}
	\caption{Comparison of trajectories generated (a) without tempering, (b) with directional, and (c) with isometric tempering. The black circles indicate a high probability density region. The circular and triangular markers indicate the start and end point of the trajectories. The star marks are placed at equal time intervals. (The time interval varies from plot to plot but is constant within each plot.)}
    \label{fig:comparison_bet_HMC_DTHMC_ITHMC}
	\end{figure}

\section{Reversible variable-step integrator for GTHMC}
\label{sec:variable_step_integrator}

Until this point, we have put aside the issue that Hamiltonian dynamics in general cannot be solved exactly. The usual St\"{o}rmer-Verlet scheme for approximating Hamiltonian dynamics encounters numerical stability issues in GTHMC. This is because the velocity $\diff \btheta / \diff t = \G_T^{-1}(\btheta)\p$ can become unboundedly large in regions of low probability.  We begin this section by quantifying this phenomena and follow it up with the development of a novel reversible integrator that overcomes this shortcoming of St\"{o}rmer-Verlet and enables practical applications of GTHMC. We then provide concrete examples of the integrator applied to ITHMC and DTHMC in Section~\ref{sec:integrator_for_ithmc_and_dthmc}.
	
\subsection{Velocity of GTHMC trajectories}
\label{subsec:GTHMC_velocity}
The velocity of a GTHMC trajectory grows rapidly as it enters a low probability region in which $\pi(\btheta) / \Norm{\pi}_\infty \ll 1$ where $\Norm{\pi}_\infty = \max_{\btheta} \pi(\btheta)$.
This is a necessary consequence of the fact that GTHMC travels through such regions without modifying the target distribution $\pi(\btheta)$; a dynamics would distort a distribution if it spends as much time in low probability regions as in high probability regions. The position coordinate of a GTHMC trajectory $\btheta(t)$ travels faster and faster as $\pi(\btheta(t))$ becomes smaller, thereby spending less time in regions with lower probability. While this enables GTHMC to transition from one mode to another, this property also makes it difficult to approximate GTHMC trajectories with a fixed step size integrator like St\"{o}rmer-Verlet. 
		
	To quantify how the velocity of a GTHMC trajectory depends on position, consider an exact (not numerically approximated) GTHMC trajectory $(\btheta(t),\p(t))$ with an initial condition $(\btheta_0, \p_0)$ drawn from the stationary distribution $\pi(\btheta,\p)$. The energy and volume conservation property of Hamiltonian dynamics implies $(\btheta(t),\p(t)) \overset{d}{=} (\btheta_0,\p_0)$ and therefore $\G_T^{-1/2}(\btheta(t)) \p(t) \sim \mathcal{N}(\mathbf{0}, \I)$ for all $t$. This suggests that the magnitude of the velocity $\diff \btheta / \diff t = \G_T^{-1}(\btheta)\p$ can grow as large as $\Norm{\G_T(\btheta(t))^{-1/2}}$ along a typical trajectory of GTHMC. Notice that, due to the constraint $\Abs{\G_T(\btheta)^{1/2}} \propto \pi(\btheta)^{(1-\frac{1}{T})}$, the matrix norm $\Norm{\G_T(\btheta)^{-1/2}}$ necessarily becomes unbounded as $\pi(\btheta) \to 0$ for $T > 1$.

\subsection{Explicit adaptive integrator with time rescaling}
\label{sec:explicit_adaptive_integrator}

The discussion in Section~\ref{subsec:GTHMC_velocity} suggests that GTHMC requires a \textit{variable stepsize} or \textit{adaptive} integrator that adjusts stepsize locally according to the current position. 
	Variable stepsize integrators can be interpreted as fixed stepsize integrators of a differential equation under \textit{time rescaling}. If $(\btheta(t),\p(t))$ denotes a solution of Hamilton's equations and a new time-scale $s$ is defined via the relation $\eta(\btheta) \diff s = \diff t$, the trajectory $(\btheta(s),\p(s))$ satisfies the following \textit{time rescaled Hamilton's equations}:
	\begin{equation} \label{eq:time_rescaled_Hamilton}
	\begin{aligned}
	\Deriv[\btheta]{s}
		&= \eta(\btheta) \nabla_{\p} H(\btheta, \p), \quad
	\Deriv[\p]{s}
		&= - \eta(\btheta) \nabla_{\btheta} H(\btheta, \p)
	\end{aligned}
	\end{equation}
	An implicit integrator similar to \textit{adaptive St\"{o}rmer-Verlet} of \cite{huang97} can be used to solve \eqref{eq:time_rescaled_Hamilton}
	The implicit updates of (adaptive) St\"{o}rmer-Verlet, however, require numerically solving for fixed points of non-linear functions and is a significant computational burden \citep{hairer06}.
	
	In order to address the above issues, we develop an explicit reversible integrator with built-in local stepsize adjustment. The integrator is a generalization of the one proposed by \cite{lan15} based on a similar variable transformation idea.
	In RMHMC settings, a Hamiltonian has the form $H(\btheta,\p) = \phi(\btheta) + \frac12 \p^T \G^{-1}(\btheta) \p$, and \eqref{eq:time_rescaled_Hamilton} can be written as:
	\begin{equation} \label{eq:time_rescaled_rmhmc}
	\begin{aligned}
	\Deriv[\btheta]{s}
		&= \eta \thinspace \G^{-1} \p, \quad
	\Deriv[\p]{s}
		&= - \eta \nabla_{\btheta} \phi + \frac12 \eta \thinspace \p^T \G^{-1} (\nabla_{\btheta} \G) \G^{-1} \p
	\end{aligned}
	\end{equation}
	where $\u^T (\nabla \M) \bm{w}$ denotes a vector whose $k$th entry is $\u^T (\partial_k \M) \bm{w}$ for $\u, \bm{w} \in \Real[d]$ and a $d \times d$ matrix valued function $\M$. With an appropriately chosen time-rescaling $\eta(\btheta)$, the differential equation \eqref{eq:time_rescaled_rmhmc} is much better-behaved than the equation in the original time scale. In fact, the choice $1/ \eta(\btheta) \propto \Norm{\G^{- 1/2}(\btheta)}$ stabilizes RMHMC trajectories in general as can be shown by an analysis similar to that of Section~\ref{subsec:GTHMC_velocity}. We now reparametrize the differential equation \eqref{eq:time_rescaled_rmhmc} in terms of the variables $(\btheta(s),\v(s))$ where $\v = \eta \thinspace \G^{-1} \p$. 
	After carrying out calculations described in Supplement Section~\ref{app:explicit_reversible_integrator}, we find that a trajectory $(\btheta(s),\v(s))$ satisfies the following differential equation:
	\begin{equation} \label{eq:vel_diff_eq_noncompact}
	\begin{aligned}
	\Deriv[\btheta]{s}
		&= \v, \quad
	\Deriv[v_k]{s}
		&= -\eta^2 \left[\G^{-1} \nabla_{\btheta} \phi \right]_k + \v^T \bGamma^k \v \ \text{ for } k = 1, \ldots, d 
	\end{aligned}
	\end{equation}
	where $[\mathbf{w}]_k$ denotes the $k$-th coordinate of $\mathbf{w}$ and $\bGamma^k = \bGamma^k(\btheta)$ denotes a symmetric matrix whose entries are defined as
	\begin{equation} \label{eq:bGamma_sym}
	\bGamma^k_{ij}
		= \sum_\ell (\G^{-1})_{k\ell} \left[ \frac12 \PDeriv{\theta_\ell} G_{ij} 
		- \frac{\eta}{2} \PDeriv{\theta_i} \left( \frac{1}{\eta} G_{\ell j} \right)
		- \frac{\eta}{2} \PDeriv{\theta_j} \left( \frac{1}{\eta} G_{\ell i} \right) \right]
	\end{equation}
	A reversible integrator of \eqref{eq:vel_diff_eq_noncompact} can be obtained by a symmetric linearly implicit scheme of Kahan \citep{lan15, ss94}, which results in the following update equations:
	\begin{equation} \label{eq:explicit_integrator_updates}
	\begin{aligned}
	\v_{1/2} 
		&= 
		\left( \I - \frac{\epsilon}{2} \v_0^T \bGamma(\btheta_0) \right)^{-1}
		\Big( \v_0 - \frac{\epsilon}{2}  \eta^2(\btheta_0) \G^{-1}(\btheta_0) \nabla \phi(\btheta_0) \Big) \\
	\btheta_{1}
		&= \btheta_0 + \epsilon \v_{1/2} \\
	\v_{1} 
		&=  
		\left( \I - \frac{\epsilon}{2} \v_{1/2}^T \bGamma(\btheta_1) \right)^{-1}
		\Big( \v_{1/2} - \frac{\epsilon}{2} \eta^2(\btheta_1) \G^{-1}(\btheta_1) \nabla \phi(\btheta_1) \Big)
	\end{aligned}
	\end{equation}
	where $\epsilon$ is a fixed step size and $\v^T \bGamma$ denotes a matrix whose $k$th row corresponds to $\v^T \bGamma^k$. The symmetry of the integrator implies that the local error is of order $O(\epsilon^3)$ i.e.
	\[ (\btheta_1, \v_1)(\epsilon) = \Fe(\btheta_0, \v_0) + O(\epsilon^3) \]
	where $\Fe$ is the solution operator of the dynamics \eqref{eq:vel_diff_eq_noncompact} \citep{leimkuhler04, neal10}. Unlike St\"{o}rmer-Verlet, this integrator is not volume-preserving, therefore the determinant of the Jacobian $\PDeriv[(\btheta_1, \v_1)]{(\btheta_0, \v_0)}$ needs to be included in the calculation of the acceptance probability in RMHMC applications (see Supplement Section~\ref{app:vlt_chmc}). We provide the derivation and further properties of the integrator in Supplement Section~\ref{app:explicit_reversible_integrator}. 	
	
\subsection{Examples: explicit adaptive integrator for ITHMC and DTHMC}
\label{sec:integrator_for_ithmc_and_dthmc}
	We illustrate how the time rescaling of Hamiltonian dynamics and resulting explicit integrator works in practice. With a metric defined as in Section~\ref{sec:ithmc} for ITHMC, we have $\Norm{\G^{-1/2}(\btheta)} = 1 / \sqrt{g(\btheta)}$, so we set $\eta(\btheta) = \sqrt{g(\btheta)}$. In this case, the matrix $\bGamma^k$ defined as \eqref{eq:bGamma_sym} becomes
	\begin{align}
	\bGamma^k 
		&= \frac{1}{2} \PDeriv[\log g]{\theta_k} \thinspace \I 
		- \frac14 \nabla_{\btheta} \log g \cdot \bm{e}_k^T 
		- \frac14 \bm{e}_k \cdot \nabla_{\btheta}^T \log g 
		\label{eq:ithmc_Gamma}
	\end{align}	 
	So we have
	\begin{align}
	\v^T \bGamma 
		&= \frac{1}{2} \nabla \log g \cdot \v^T 
		- \frac14 \IP{\v}{\nabla \log g} \I
		- \frac14 \v \cdot \nabla^T \log g 
		\label{ithmc_vGamma} \\
	\Longrightarrow 
	\left( \I - \frac{\epsilon}{2} \v^T \bGamma \right)
			&= \left(1 + \frac{\epsilon}{8} \IP{\v}{\nabla \log g} \right) \I
			- \frac{\epsilon}{4} \nabla_{} \log g \cdot \v^T
			+ \frac{\epsilon}{8} \v \cdot \nabla^T \log g \nonumber
	\end{align}
	Since the above matrix is a rank-2 perturbation of an identity, it can be inverted in $O(d)$ using the Sherman-Morrison formula to carry out the velocity updates in \eqref{eq:explicit_integrator_updates}:
	\begin{align*}
	\v^* 
		&= \left( \I - \frac{\epsilon}{2} \v^T \bGamma \right)^{-1} \left(\v + \frac{\epsilon}{2T} \nabla \log \pi \right)
	\end{align*}	
	The determinant $\Abs{\Diff \Fe}$ needed in the acceptance probability calculation can also be computed in $O(d)$ using the matrix determinant lemma (see \eqref{eq:v_Jacobian} in Supplement Section~\ref{app:explicit_reversible_integrator} for the formula of the Jacobian). 
	
	For DTHMC with a metric as in \eqref{eq:dthmc_metric}, we have $\Norm{\G^{-1/2}(\btheta)} = 1 / \sqrt{\gpar(\btheta)}$, so we set $\eta(\btheta) = \sqrt{\gpar(\btheta)}$. As in ITHMC, the numerical integration and determinant computation can be carried out in $O(d)$ because the matrix $\left( \I - \frac{\epsilon}{2} \v^T \bGamma \right)$ is a rank-3 perturbation of identity. The formulas for $\bGamma^k$ and $\v^T \bGamma$ are more complicated than those for ITHMC, however, and we refer the readers to the supplemental appendix for their full expressions.

\subsection{Variable length trajectory compressible HMC}

Although the variable step integrator of Section~\ref{sec:variable_step_integrator} enables an efficient and accurate approximation of otherwise unstable trajectories, the required time-rescaling of a Hamiltonian dynamics destroys its volume-preserving property. The modified acceptance-rejection scheme of compressible HMC (CHMC) \citep{fang14, lan15} can be used to preserve the stationary distribution, but its use in GTHMC settings generally suffers from low acceptance probabilities and poor mixing.
\if0\blind { To overcome this issue, we introduce \textit{variable length trajectory CHMC} (VLT-CHMC). By allowing individual trajectories to have different path lengths, VLT-CHMC constructs a transition kernel that better approximates the original dynamics and has a guaranteed high acceptance probability. A self-contained summary of the main idea and results on VLT-CHMC are provided in Supplement Section~\ref{app:vlt_chmc}. }\fi
\if1\blind{ Instead, we employ \textit{variable length trajectory CHMC} (VLT-CHMC) of \citet{nishimura16vlt}. By allowing individual trajectories to have different path lengths, VLT-CHMC constructs a transition kernel that better approximates the original dynamics and has a guaranteed high acceptance probability. A self-contained summary of the main idea and results on VLT-CHMC are provided in Supplement Section~\ref{app:vlt_chmc}.} \fi

\section{Simulations}
\label{sec:simulation}
	We compare the performance of HMC and GTHMC on various multi-modal target distributions to demonstrate the advantage of GTHMC. The effect of different temperatures and tempering schemes are also illustrated.
	
\subsection{Performance metric}
	Following \cite{hoffman14}, we compute the effective sample sizes (ESS) of marginal mean and variance estimators for each coordinate of a target distribution and report the minimum of these values. 
	For the majority of posterior distributions encountered in practice, the most computationally expensive parts of the algorithms are evaluations of $\nabla_{\btheta} \log \pi(\cdot)$. We therefore normalize ESS by the number of the gradient evaluations to account for the costs of each iteration. We also report ESS per 100 MCMC samples so that the qualities of the samples can be compared to independent ones. 
	
	The ESSs are estimated using the monotone sequence estimator of \cite{geyer92} with a small modification. In estimating the lag $k$ auto-covariance $a(k)$ of a statistic $g(\btheta)$, the true mean $\mu(g) := \mathbb{E}[g(\btheta)]$ is used in place of the empirical mean since this procedure leads to more reliable estimates of ESSs \citep{hoffman14}. The expectations were computed analytically or numerically with high accuracy. 

\subsection{Tuning parameters of HMC and GTHMC}
	Finding an optimal value of path length $\tau = \epsilon N$ for HMC is known to be difficult \citep{neal10}, so we use a variant of HMC known as the No-U-Turn-Sampler (NUTS) by \cite{hoffman14} which automatically adapts the path length for individual trajectories of Hamiltonian dynamics. The use of NUTS to benchmark against GTHMC is appropriate since NUTS uses the same underlying dynamics as HMC and has been shown empirically to perform as well as optimally tuned HMC in a variety of situations. The mass matrices of ITHMC and DTHMC as in Section~\ref{sec:ITHMC_and_DTHMC} degenerate to the identity when $T = 1$, so for fair comparison we used the identity mass matrix for NUTS. The stepsize $\epsilon$ was tuned using the dual-averaging algorithm of \cite{hoffman14} so that the average acceptance probability corresponds to a pre-specified value $\delta \in (0,1)$. Theoretical and empirical studies suggest the values of $\delta  \in [0.6, 0.8]$ to be optimal \citep{beskos13, hoffman14} and the values of $\delta = 0.5, 0.6, \ldots, 0.9$ were tried for each target distribution.
	
	For ITHMC and DTHMC, the parameters $\epsilon$ and $\tau$ were tuned alternately for a few times with one of them fixed while the other is adjusted. A modified dual-averaging algorithm was used to tune $\epsilon$ to achieve an appropriate acurracy in the numerical approximation of Hamiltonian dynamics. The path length $\tau$ was tuned to maximize a normalized expected squared jumping distance \citep{wang13}. 
	
	
\subsection{Example: bi-modal Gaussian mixture}
\label{sec:bimodal_example}
	We first compared the performance of NUTS and GTHMC on a simple bi-modal target distribution, a mixture of 2-d standard Gaussians centered at $(0,-4)$ and $(0,4)$ with equal weights as in Figure~\ref{fig:comparison_bet_HMC_DTHMC_ITHMC}. We ran ITHMC, DTHMC with $\gamma = .75$, and DTHMC with $\gamma = 1$ at different temperatures. DTHMC was tempered along the first coordinate. The performance of each algorithm is summarized in Table~\ref{table:ess_for_2d_bimodal}. ITHMC improves over NUTS substantially in terms of ESS, with further improvement obtained by DTHMC. Figure~\ref{fig:bimodal_traceplot} compares the traceplot of the best performing NUTS ($\delta = 0.7$) and DTHMC ($T = 20, \gamma = 1$). The efficiency gain by ITHMC and DTHMC are partially offset by the increased number of numerical integration steps required to accurately simulate GTHMC trajectories, as seen in ESS per gradients. The minimum ESS came from the mean estimator along the first coordinate for all the simulations, except for DTHMC with $\gamma = 1$ and $T = 25$; in general the directions orthogonal to the tempered one are explored less efficiently by DTHMC as the parameter $\gamma$ and the temperature $T$ increases.
	
\begin{table}
\caption{Comparison of minimum ESS at different temperatures for the 2-d bimodal target. ESS per 100 MCMC samples or per 6656 gradients evaluations are shown.}
\label{table:ess_for_2d_bimodal}
\centering
\vspace{5pt}
\scalebox{0.9}{
\begin{tabular}{rccccc}
\hline                                                                                                                          
Temperature & 5 & 10 & 15 & 20 & 25 \\                                                                                          
\hline                                                                                                                          
ITHMC ESS per samples & 0.279 & 0.421 & 0.445 & 0.469 & 0.510 \\                                                                 
\hline                                                                                                                          
DTHMC ($\gamma = .75$) ESS per samples & 1.10 & 2.56 & 3.20 & 3.67 & 3.63 \\                                                      
\hline                                                                                                                          
DTHMC ($\gamma = 1$) ESS per samples & 3.91 & 13.0 & 17.9 & 18.2 & 16.4 \\  
\hline
NUTS ($\delta = .7$) ESS per samples &  \multicolumn{5}{c}{0.0342} \\                                                           
\hline                                                                                                                          
ITHMC ESS per gradients & 3.37 & 4.90 & 5.11 & 5.27 & 5.80 \\                                                                     
\hline                                                                                                                          
DTHMC ($\gamma = .75$) ESS per gradients & 8.60 & 17.6 & 21.3 & 21.4 & 22.1 \\                                                   
\hline                                                                                                                          
DTHMC ($\gamma = 1$) ESS per gradients & 23.0 & 49.8 & 59.4 & 65.3 & 52.2 \\       
\hline
NUTS ($\delta = .7$) ESS per gradients &  \multicolumn{5}{c}{1} \\     
\hline                                                                                                                          
\end{tabular}
}
\end{table}  

\begin{figure}
	\centering
	\begin{minipage}[b]{0.55\textwidth}
	\includegraphics[width=\textwidth]{./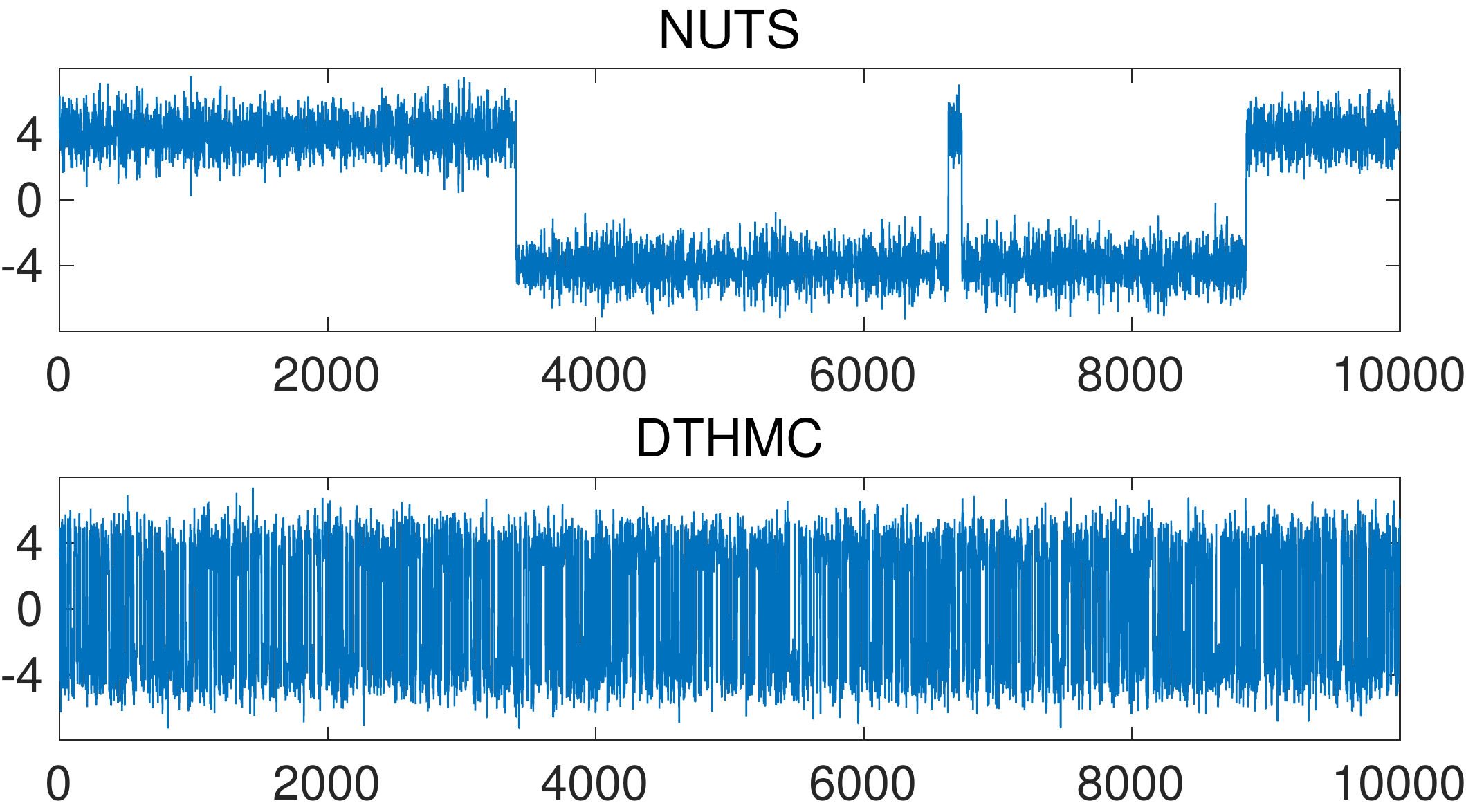} 
	\caption{Traceplot of the first coordinate from $10^4$ samples generated by NUTS ($\delta = 0.7$) and DTHMC ($T = 20, \gamma = 1$).}
	\label{fig:bimodal_traceplot}
	\end{minipage}
	~
	\begin{minipage}[b]{0.38\textwidth}
\centering
\includegraphics[width=\textwidth]{./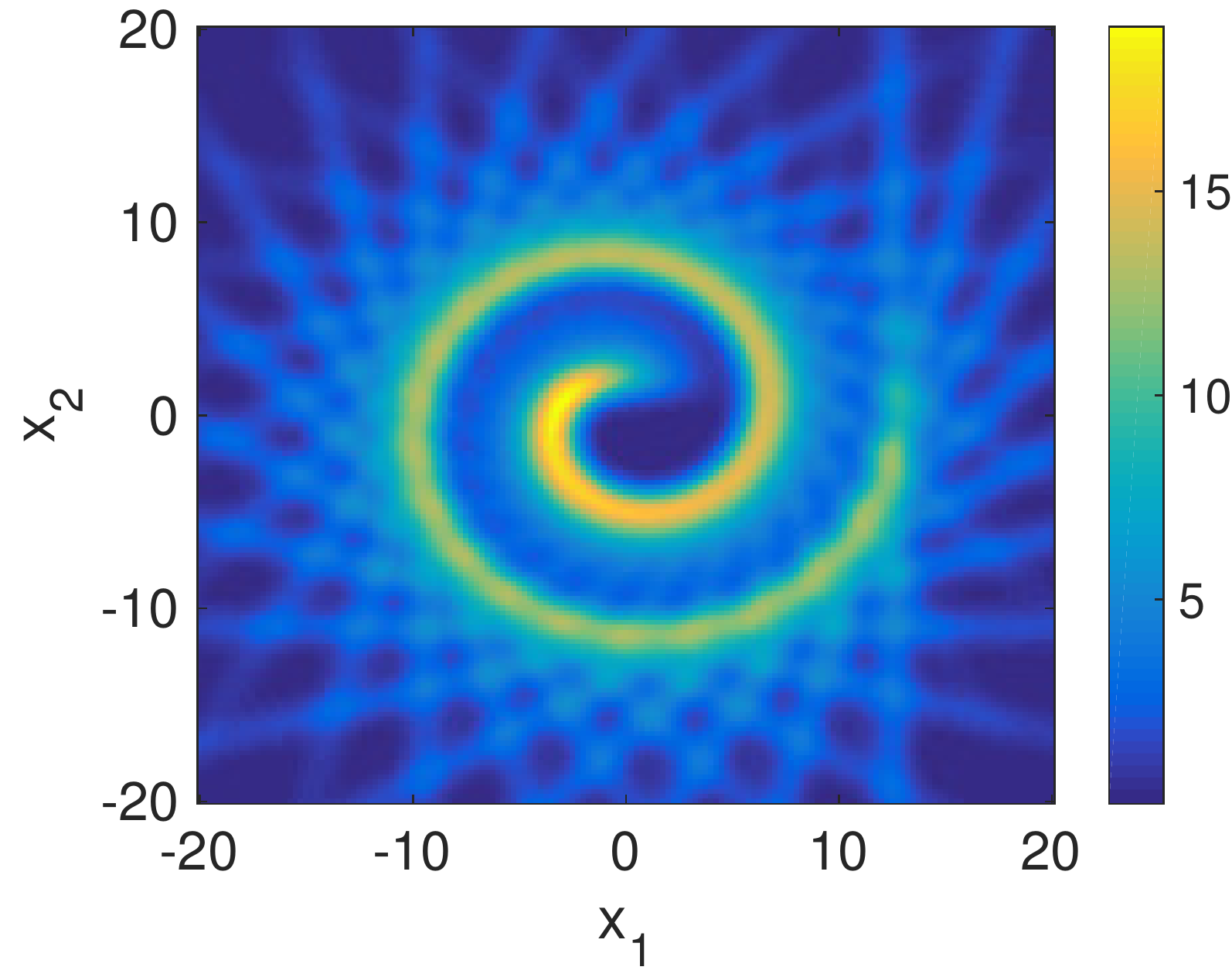} 
\label{fig:rough_swiss_roll}
\caption{Plot of unnormalized swiss roll target distribution.}
	\end{minipage}
	\end{figure}
	
\subsection{Example: Swiss roll distribution}
	For a ``swiss roll'' target as shown in Figure~\ref{fig:rough_swiss_roll}, defined as a Gaussian mixture, we ran NUTS, ITHMC, and DTHMC with $\gamma = .75$. The tempering direction for DTHMC was generated uniformly from a space of unit vectors and independently at each iteration. The performance of each algorithm is summarized in Table~\ref{table:ess_for_swiss_roll}. The potential energy barrier between the ``inner'' and ``outer'' roll is not large, so ITHMC can easily move between them even at $T = 5$. It appears that increasing temperature beyond this point is wasteful in terms of the number of gradient evaluations as the trajectories spend more time exploring the low probability region before finally arriving at the high probability region. It is possible, however, the decrease in ESS per gradients is an artifact of our tuning algorithm. The efficiency of DTHMC here is limited by the lack of preferred direction in the target distribution. 


\begin{table}
\caption{Comparison of ESS across different temperatures for the swiss roll target. ESS per 100 MCMC samples or per 214 gradients evaluations are shown.}
\label{table:ess_for_swiss_roll}                                                                                                                                                                                                                   
\centering                                                                                                                     
\vspace{5pt}
\scalebox{0.9}{
\begin{tabular}{rccccc}                                                                                                 
\hline                                                                                                                         
Temperature & 5 & 10 & 15 & 20 & 25 \\                                                                                         
\hline                                                                                                                         
ITHMC ESS per samples & 50.4 & 42.1 & 42.4 & 46.8 & 42.6 \\   
\hline                                                                                                                         
DTHMC ($\gamma = .75$) ESS per samples & 10.5 & 10.6 & 10.3 & 10.6 & 11.1 \\                                          
\hline
NUTS ($\delta = .8$) ESS per samples &  \multicolumn{5}{c}{6.48} \\
\hline                                                                                                                         
ITHMC ESS per gradients & 1.80 & 1.73 & 1.68 & 1.59 & 1.54 \\
\hline                                                                                                                         
DTHMC ($\gamma = .75$) ESS per gradients & 0.637 & 0.581 & 0.598 & 0.528 & 0.495 \\ 
\hline
NUTS ($\delta = .8$) ESS per gradients &  \multicolumn{5}{c}{1} \\                                 
\hline                                                                                                                         
\end{tabular}                                                                                                                  
}
\end{table}

\subsection{Example: spherically symmetric ``donut'' distribution}
\label{sec:spherical_example}
	To see how GTHMC performs in higher dimensions, we ran NUTS and ITHMC on a 25-dimensional spherically symmetric distribution defined as follows:
	\begin{equation*}
	\pi(\btheta) = \sum_{i=1}^3 \frac{1}{\sigma} \exp \left( - \frac{(\Norm{\btheta} - \mu_i)^2}{2 \sigma^2} \right)
	\ \text{ where } \mu_i = i/2, \ \sigma = 0.1
	\end{equation*}
	The probabilities are therefore concentrated at the spherical shells of radius $\mu_i$'s. One may wonder if the bottleneck in this example is multi-modality or other geometric features, so we additionally report the ESS of a statistic $\Norm{\btheta}$ as a measure of efficiency in exploring the radial direction. The results are summarized in Table~\ref{table:ess_for_spherical}. The ESSs along the radial direction are much smaller, clearly indicating the multimodality to be the bottleneck. Also clear is ITHMC's ability to better deal with the multimodality. In addition, the higher coordinate-wise ESS shows that ITHMC inherits the ability of HMC to explore a complex distribution relatively efficiently. 
	
	The temperature of ITHMC was fixed at $T = 5$ since, as in the swiss roll example, the performance did not change significantly at higher temperature. DTHMC was not tried on this example since DTHMC does not scale well to higher dimensions without localizing the Riemannian metric, which is beyond the scope of this paper.
	
\begin{table}
\caption{Comparison of ESS along the coordinates and along the radial direction. ESS per 100 MCMC samples or per 831 gradient evaluations are shown.}
\label{table:ess_for_spherical}                                                                                                     
\centering                                                                                                                     
\vspace{5pt}
\scalebox{0.9}{
\begin{tabular}{rcc}                                                                                                     
\hline                                                                                                                         
 & Coordinate-wise & Radial \\                                                                                                  
\hline                                                                                                                         
 ITHMC ($T = 5$) ESS per samples & 12.7 & 3.28 \\                                                                                    
\hline                                                                                                                         
 NUTS ($\delta = 0.8$) ESS per samples & 7.30 & 1.13 \\                                                                                    
\hline                                                                                                                         
 ITHMC ($T = 5$) ESS per gradients & 13.1 & 3.43 \\                                                                                
\hline                                                                                                                         
 NUTS ($\delta = 0.8$) ESS per gradients & 6.43 & 1 \\                                                                                
\hline                                                                                                                         
\end{tabular}                                                                                                                  
}
\end{table} 

\section{Discussion}
\label{sec:discussion}
	This paper presented a theoretical and practical framework for alleviating the tendency of HMC to get stuck at local modes. HMC is a general and powerful sampling algorithm widely used in practice, hence addressing its main weakness is of considerable interest. We established the necessary condition on a Riemannian metric and studied the properties of the corresponding Hamiltonian dynamics. In addition, we developed a novel adaptive reversible integrator as well as improved adaptive-rejection mechanism to address the shortcomings of the standard St\"{o}rmer-Verlet.
	
	GTHMC clearly has room for further improvement in two aspects. First, more research effort is needed to develop better numerical integrators for RMHMC and GTHMC applications. Numerical integrators traditionally have been developed to achieve highly accurate trajectories for a long integration time, while in an RMHMC application a required integration time is usually shorter and accuracy is not so important as overall computational efficiency. \cite{blanes14} is one of the first attempts to develop an integrator tailor-made for HMC beyond the standard St\"{o}rmer-Verlet. To our knowledge, the explicit adaptive reversible integrator for non-separable Hamiltonians presented in Section~\ref{sec:explicit_adaptive_integrator} is the first of its kind, and a better numerical integrator can likely be developed with increased research effort in this area.
	
	Second, GTHMC can benefit from a metric more specifically chosen for each multimodal target distribution rather than the generic ones considered in this paper.  ITHMC is a clear improvement over HMC, but is still not efficient in the absolute sense. In fact, it was observed that ITHMC barely performs better than HMC in higher dimensions when modes are isolated (not reported in the paper). This is because a randomly generated trajectory is unlikely to travel in the right direction in a high dimension without encoding more information in the metric. On the other hand, the bi-modal example in Section~\ref{sec:bimodal_example} demonstrates that GTHMC has potential to sample efficiently even from a target distribution with substantial multi-modality.
	
	It is also worth noting that GTHMC can be combined with other (non-geometric) tempering approaches to further promote transitions among the modes in the presence of severe multi-modality. These tempering methods are meta-algorithms and in practice require an additional specification of a transition kernel to sample from tempered distributions $\propto \pi(\btheta)^{1/T_i}$ where the sequence of temperatures $1 = T_1 < T_2 < \ldots < T_M$ must also be specified by a user \citep{earl05, geyer95, marinari92}. 
	The largest temperature $T_M$ must be large enough that the transition kernel can easily induce transitions from one mode to another. Increasing $T_M$ however comes at the cost of increasing the computational time in relating the tempered distribution $\propto \pi(\btheta)^{1/T_M}$ back to the original distribution. For this reason, even within the tempering algorithms it is desirable to use a transition kernel less prone to be stuck at local modes so that the temperatures do not need to be unnecessarily large. GTHMC can provide such a transition kernel, inheriting otherwise desirable characteristics of HMC.

\section{Appendix: Proof of Theorem~\ref{thm:Newton_Hamilton_bijection}}
\label{sec:RMHMC_newton_proof}

For the purpose of the proof, we consider the Jacobians $\Diff\g_{\btheta}$ and $\Diff(\g^{-1})_{\tilde{\btheta}}$ as bijective maps between $\Real[d]$ and $T_{\tilde{\btheta}} M$ rather than non-square matrices, so that the inverse $(\Diff\g_{\btheta})^{-1} = \Diff(\g^{-1})_{\tilde{\btheta}}$ makes sense. One may think of these Jacobians as a square matrix with respect to some basis for $T_{\tilde{\btheta}}M$. One can easily verify that the calculations in the proof are independent of choice of basis. Additionally, for notational convenience we suppress the superscript $M$ from the gradient $\nabla^M$ for a function defined on a manifold $M$. 
	\begin{proof}
	By direct computation, we will prove the equivalence between the differential equations for $(\btheta(t), \p(t))$ and Hamilton's equations with the Hamiltonian $H(\btheta,\p) = -\log \pi(\btheta) + \frac12 \log \Abs{\G(\btheta)} + \frac12 \p^T \G(\btheta)^{-1} \thinspace \p$. Recalling the relations $\btheta = \g^{-1}(\tilbtheta)$, $\Deriv[\tilde{\btheta}]{t} = \tilp$, and $\p = \Diff\g^T_{\tilbtheta} \, \tilp$, we find
	\begin{align*}
	\Deriv[\btheta]{t} 
		= \Diff(\g^{-1})_{\tilde{\btheta}} \Deriv[\tilde{\btheta}]{t} 
		= (\Diff\g_{\btheta})^{-1} \tilde{\p} 
		= (\Diff\g_{\btheta})^{-1} (\Diff\g_{\btheta})^{-T} \p 
		= \G(\btheta)^{-1} \p 
		= \nabla_{\p} H(\btheta, \p)
	\end{align*} 
	The computation for $\Deriv[\p]{t}$ is a bit more involved. First note that
	\begin{align}
	\Deriv[p_i]{t}
		= \Deriv[]{t} \IP{\PDeriv[\g]{\theta_i} (\btheta) }{\tilp} 
		= \IP{ \left(\Diff\PDeriv[\g]{\theta_i}\right)_{\btheta} \Deriv[\btheta]{t} }{\tilp} + \IP{\PDeriv[\g]{\theta_i}}{\nabla_{\tilbtheta} \log \tilde{\pi} } \label{eq:mom_unsimplified}
	\end{align} 
	The first term in the last equation will simplify as follows:
	\begin{align*} 
	\IP{ \left(\Diff\PDeriv[\g]{\theta_i}\right)_{\btheta} \Deriv[\btheta]{t} }{\tilp}
		&= \IP{ \left( \PDeriv[]{\theta_i} \Diff\g_{\btheta} \right) \G(\btheta)^{-1} \p}{\Diff\g_{\btheta} \G(\btheta)^{-1} \p} \\
		&= \frac12 \p^T \G(\btheta)^{-1} \PDeriv[]{\theta_i} \left( \Diff\g_{\btheta}^T \Diff\g_{\btheta}\right) \G(\btheta)^{-1} \p \\
		&= \frac12 \p^T \G(\btheta)^{-1} \PDeriv[\G(\btheta)]{\theta_i} \G(\btheta)^{-1} \p \yesnumber \label{eq:mom_unsimplified_first_term}
	\end{align*}
	We can simplify the second term in \eqref{eq:mom_unsimplified} using Lemma~\ref{lem:gradient_under_transformation} follows:
	\begin{align*}
	\IP{\PDeriv[\g]{\theta_i}}{\nabla_{\tilbtheta} \log \tilde{\pi} }
		&= \IP{(\Diff\g_{\btheta})^{-1} \PDeriv[\g]{\theta_i}}{ \nabla_{\btheta} \log \pi (\btheta) - \frac12 \nabla_{\btheta} \log \Abs{\G(\btheta)}} \\
		&= \PDeriv[]{\theta_i} \log \pi (\btheta) 
			- \frac12 \PDeriv[]{\theta_i} \log \Abs{\G(\btheta)} \yesnumber \label{eq:mom_unsimplified_second_term}
	\end{align*} 
	From \eqref{eq:mom_unsimplified}, \eqref{eq:mom_unsimplified_first_term}, and \eqref{eq:mom_unsimplified_second_term}, we conclude that
	\begin{equation*} \label{eq:mom_deriv}
	\Deriv[p_i]{t}
		= \PDeriv[]{\theta_i} \log \pi (\btheta) 
			- \frac12 \PDeriv[]{\theta_i} \log \Abs{\G(\btheta)}
			+ \frac12 \p^T \G(\btheta)^{-1} \PDeriv[\G(\btheta)]{\theta_i} \G(\btheta)^{-1} \p 
		= - \PDeriv[]{\theta_i} H(\btheta, \p) \qedhere
	\end{equation*}
	\end{proof}
	
	\begin{Lemma}
	\label{lem:gradient_under_transformation}
	If $\pi(\btheta)$ is a pdf on $\Real[d]$ and $\tilde{\pi}(\tilbtheta)$ is a pdf on a manifold $M$ induced by the bijective map $\g:\Real[d] \to M$, then 
	\[ \nabla_{\tilbtheta} \log \tilde{\pi}(\tilbtheta)
		= (\Diff\g_{\btheta})^{-T} \left( \nabla_{\btheta} \log \pi (\btheta) - \frac12 \nabla_{\btheta} \log \left| \Diff \g_\btheta^T \Diff \g_\btheta \right| \right) \]
	\end{Lemma}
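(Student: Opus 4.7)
The plan is to combine the generalized change-of-variables formula for densities with the chain rule for gradients on a manifold, both of which are standard tools once the geometric conventions are set correctly.

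First I would write down the change-of-variables formula (this is the Federer area formula already cited in the main text): since $\tilde\pi$ is the pushforward of $\pi$ under the bijection $\g$, and since the $d$-dimensional volume distortion factor of $\g$ viewed as a map into $\Real[\tilde d]$ is $\bigl|\Diff\g_\btheta^T \Diff\g_\btheta\bigr|^{1/2}$ (the square root of the Gram determinant), the induced density satisfies
\[ \tilde\pi(\tilbtheta) \, \bigl|\Diff\g_\btheta^T \Diff\g_\btheta\bigr|^{1/2} = \pi(\btheta), \qquad \btheta = \g^{-1}(\tilbtheta). \]
Taking logarithms gives a scalar identity that expresses $\log\tilde\pi$, regarded as a function on $M$, as a pullback along $\g$ of an explicit function on $\Real[d]$.

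Next I would invoke the transformation law for Riemannian gradients. For any differentiable $\tilde f : M \to \Real$ with pullback $f := \tilde f \circ \g : \Real[d] \to \Real$, the chain rule together with the definition of the Riemannian gradient as the Riesz representative of the differential under the induced metric $\Diff\g_\btheta^T\Diff\g_\btheta$ yields
\[ \nabla_\btheta f(\btheta) = (\Diff\g_\btheta)^T \, \nabla^M_{\tilbtheta} \tilde f(\tilbtheta), \]
so $\nabla^M_{\tilbtheta}\tilde f = (\Diff\g_\btheta)^{-T} \nabla_\btheta f$. Applying this to $f(\btheta) := \log \pi(\btheta) - \tfrac12 \log\bigl|\Diff\g_\btheta^T \Diff\g_\btheta\bigr|$, which by the previous display equals $\log\tilde\pi(\tilbtheta)$ after composition with $\g$, produces the claimed identity in one line.

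The only point requiring care, and the only thing I would flag as a genuine obstacle, is the interpretation of $(\Diff\g_\btheta)^{-T}$: formally $\Diff\g_\btheta$ is a $\tilde d \times d$ matrix and has no inverse. As already set up in the preamble to the proof of Theorem~\ref{thm:Newton_Hamilton_bijection}, one treats $\Diff\g_\btheta$ as a bijective linear map $\Real[d] \to T_{\tilbtheta}M$ between inner-product spaces (inheriting the ambient inner product on $T_{\tilbtheta}M \subset \Real[\tilde d]$), so its inverse and transpose are well-defined and the final formula is manifestly basis-independent. Once this convention is fixed, no further geometric subtlety arises and the result reduces to a routine chain-rule computation.
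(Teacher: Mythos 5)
Your proof is correct and follows essentially the same route as the paper's: both apply the change-of-variables formula $\log\tilde{\pi}(\tilbtheta) = \log\pi(\btheta) - \tfrac12\log\left|\Diff\g_\btheta^T\Diff\g_\btheta\right|$ and then the gradient transformation law $\nabla_{\tilbtheta}(f\circ\g^{-1}) = (\Diff\g_\btheta)^{-T}\nabla_\btheta f$. If anything, you supply slightly more justification than the paper does for that second step (deriving it from the definition of the manifold gradient) and for the meaning of $(\Diff\g_\btheta)^{-T}$, which the paper simply states as an observation under the conventions fixed before Theorem~\ref{thm:Newton_Hamilton_bijection}.
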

	\begin{proof}
	By the change of variable formula, we have
	\begin{equation}
	\log \tilde{\pi}\big( \tilde{\btheta} \big) = 
		- \frac{1}{2} \log \left| \Diff \g_\btheta^T \Diff \g_\btheta \right| + \log \pi(\btheta)
	\end{equation}
	Now we only need to observe that the following equality holds for any scalar-valued function $f(\btheta)$ on $\Real[d]$:
	\begin{equation*}
	\nabla_{\tilbtheta} f \circ \g^{-1} (\tilbtheta)
		= \Diff(\g^{-1})_{\tilbtheta}^T \, \nabla_{\btheta} f(\btheta)
		= (\Diff\g_{\btheta})^{-T}  \nabla_{\btheta} f(\btheta) \qedhere
	\end{equation*}
	\end{proof}

\spacingset{1.45} 
\bibliographystyle{apalike}
\bibliography{GTHMC}

\newpage
\allowdisplaybreaks
\appendix
 \begin{center}
	{\LARGE\bf Supplementary Materials to ``Geometrically Tempered Hamiltonian Monte Carlo''}
\end{center}
\medskip

\section{Relevant geometric notions}

\subsection{Gradient on a manifold}
Consider a function $\tilU(\tilbtheta)$ defined on a $d$-dimensional manifold $M \subset \Real[\tilde{d}]$ and let $T_{\tilbtheta} M \subset \Real[\tilde{d}]$ denote the tangent space of $M$ at $\tilbtheta$. The gradient $\nabla^M \tilU(\tilbtheta)$ can be defined as a unique vector in $T_{\tilbtheta} M$ such that
\begin{equation} \label{eq:def_manif_gradient}
\IP{\nabla^M \tilU(\tilbtheta)}{\mathbf{c}'(0)} 
= \left. \Deriv{t} \thinspace \tilU(\mathbf{c}(t)) \right|_{t=0}
\end{equation}
for all differentiable curves $\mathbf{c}(t)$ on $M$ with $\mathbf{c}(0) = \tilbtheta$. Notice that, under the constraint $\Norm{\mathbf{c}'(0)} = 1$, the left hand side in \eqref{eq:def_manif_gradient} is maximized when $\mathbf{c}'(0)$ is parallel to $\nabla^M \tilU(\tilbtheta)$, agreeing with our intuition of the gradient as the direction of the greatest increase in $\tilU(\tilbtheta)$. 

\subsection{Probability density function on a parametrized manifold}
Due to the difference in the integration theory over a Euclidean space and a manifold, a pdf on a manifold is defined slightly differently from those on a Euclidean space. Here we describe one way to define a pdf on a parametrized manifold through a generalized change of variable formula.

Suppose a random variable $\btheta \in \Real[d]$ has a pdf $\pi(\btheta)$. Given a parametrization (i.e.\ differentiable bijection) $\g$ of a manifold $M$, a random variable $\tilbtheta = \g(\btheta) \in M$ has a pdf
\begin{equation} \label{eq:pdf_on_parametrized_manifold}
\tilde{\pi}(\tilbtheta) = \Abs{\Diff\g_{\btheta}^T \Diff\g_{\btheta}}^{-1/2} \pi (\btheta) 
\quad \text{ where } \btheta = \g^{-1}(\tilbtheta)
\end{equation}
If $\g$ were a bijection between Euclidean spaces and $\Diff\g_{\btheta}$ were a square matrix, then the above formula reduces to the standard change of variable formula, where $\Abs{\Diff\g_{\btheta}}$ is the change of volume factor. More generally, it can be shown that $\Abs{\Diff\g_{\btheta}^T \Diff\g_{\btheta}}^{1/2}$ represents the volume of a $d$-dimensional parallelepiped
\[ P = \left\{ 
\sum_{i=1}^d c_i \PDeriv[\g]{\theta_i}(\btheta) \in T_{\btheta}M \thinspace : \thinspace 
\sum_i c_i \leq 1, \thinspace c_i \geq 0
\right\} \]

\subsection{Bijective map from a dynamics on a manifold to Euclidean space}
Given a parametrization $\g:\Omega \subset \Real[d] \to M$ of a manifold $M \subset \Real[\tilde{d}]$, the $d$ by $\tilde{d}$ matrix  $\Diff \g_{g^{-1}(\tilbtheta)}^T$ is a bijection from the tangent space $T_{\tilbtheta}M \subset \Real[\tilde{d}]$ to $\Real[d]$. This is due to the following elementary fact from linear algebra: given a full rank $\tilde{d} \times d$ matrix $\mathbf{A}$, its transpose $\mathbf{A}^T$ is a bijection from range($\mathbf{A}$) to $\Real[d]$. It then follows that the product map $\g^{-1} \times \Diff \g^T$ defined as
\begin{equation} \label{eq:bijection_bet_RMHMC_HMC}
\g^{-1} \times \Diff \g^T(\tilbtheta,\tilp) = (\g^{-1}(\tilbtheta), \Diff \g_{g^{-1}(\tilbtheta)}^T \tilp)
\end{equation}
is a bijection from a collection of tangent space $\cup_{\tilbtheta \in M} T_{\tilbtheta}M$ to $\Omega \times \Real[d]$. (The collection $TM = \cup_{\tilbtheta \in M} T_{\tilbtheta}M$ is also known as a \textit{tangent bundle}.) Therefore the product map bijectively relates a dynamics on a manifold $M$ to one on a Euclidean space.

\section{Geometric theory of manifold Langevin algorithm}
\label{app:geometric_theory_for_mmala}
Riemann manifold Metropolis adjusted Langevin algorithm (MMALA) is the Langevin dynamics analogue of RMHMC and described by \cite{girolami11} as a potentially useful alternative to RMHMC.	Given a metric $\G(\btheta)$, MMALA generates a proposal by approximating the following SDE for $\btheta = (\theta_1, \ldots, \theta_d)$:
\begin{equation}
\label{eq:reparam_langevin}
\begin{aligned}
\diff \theta_i
&= \frac{1}{2} 
\left\{ \G^{-1}(\btheta) \left( \nabla_{\btheta} \log \pi (\btheta) - \frac12 \nabla_{\btheta} \log \Abs{\G(\btheta)} \right) 
\right\}_i \diff t \\
&\hspace{3em} + \big\{ \G^{-1/2}(\btheta) \diff {\mathbf{B}}_t \big\}_i 
+ \frac12 |\G(\btheta)|^{-1/2} \sum_{j=1}^d \PDeriv{\theta_j} \left\{ |\G(\btheta)|^{1/2} \left(\G^{-1}(\btheta) \right)_{ij}  \right\} \diff t 
\end{aligned}
\end{equation}
where $\mathbf{B}(t)$ is a Brownian motion.	Note that the above equation differs from the one originally presented in \cite{girolami11} which contains a transcription error \citep{xifara14}.

Theorem~\ref{thm:MMALA_as_reparam_MALA} below is a Langevin dynamics analogue of Theorem~\ref{thm:RMHMC_as_reparam_HMC}, establishing a geometric connection between the standard Langevin dynamics \eqref{eq:langevin} and the SDE \eqref{eq:reparam_langevin}. Due to the stochastic nature of Langevin dyanmics, defining it on a manifold through the language of extrinsic geometry turns out to be far more challenging than doing the same for Hamiltonian dynamics \citep{rogers00}. For simplicity, therefore, Theorem~\ref{thm:MMALA_as_reparam_MALA} invokes a stronger assumption than Theorem~\ref{thm:RMHMC_as_reparam_HMC} and assumes that the reparametrization $\g$ is a map between subsets of $\Real[d]$.

\begin{Theorem}[Manifold Langevin as reparametrization]
\label{thm:MMALA_as_reparam_MALA}
Given a pdf $\pi(\btheta)$ on $\Real[d]$, let $\tilde{\pi}$ denote the pdf on a domain $U \subset \Real[d]$ induced by a smooth bijection $\g:\Real[d] \to U$. For the initial condition $\btheta_0 \in \Real[d]$ and $\tilbtheta_0 = g(\btheta_0)$, let $\tilbtheta(t)$ denote a weak solution of the SDE
\begin{equation}
\label{eq:langevin}
\diff \tilbtheta = 
\frac{1}{2} \nabla_{\tilbtheta} \log \tilde{\pi}(\tilbtheta) \diff t
+ \diff \tilde{\mathbf{B}}(t)
\end{equation}
where $\tilde{\mathbf{B}}(t)$ is a Brownian motion. Then the stochastic process $\btheta(t) = \g^{-1}\big(\tilbtheta(t) \big)$ is a weak solution of the SDE \eqref{eq:reparam_langevin} with $\G(\btheta) = \Diff \g^T_{\btheta} \Diff \g_{\btheta}$.
\end{Theorem}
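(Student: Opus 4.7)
My plan is to apply multidimensional It\^o's formula to the transformation $\btheta(t) = \g^{-1}(\tilbtheta(t))$ and then match coefficients with \eqref{eq:reparam_langevin}. Since $\g$ is a smooth bijection between subsets of $\Real[d]$, the Jacobian $\Diff\g_\btheta$ is an invertible square matrix and all calculations reduce to straightforward (though tedious) multivariable calculus; there is no need here for the more delicate tools needed to define Brownian motion on a general manifold.

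First I would compute the diffusion coefficient. By the inverse function theorem, $\Diff(\g^{-1})_{\tilbtheta} = (\Diff\g_\btheta)^{-1}$, so It\^o's formula gives a diffusion term $(\Diff\g_\btheta)^{-1} \diff\tilde{\mathbf{B}}(t)$ with instantaneous covariance $(\Diff\g_\btheta)^{-1}(\Diff\g_\btheta)^{-T} \diff t = \G^{-1}(\btheta) \diff t$. This matches the covariance of $\G^{-1/2}(\btheta) \diff\mathbf{B}_t$ in \eqref{eq:reparam_langevin}, so the diffusion parts agree in the weak sense. Next, for the first-order part of the It\^o drift, I would apply Lemma~\ref{lem:gradient_under_transformation} to rewrite $\tfrac{1}{2}(\Diff\g_\btheta)^{-1}\nabla_{\tilbtheta}\log\tilde{\pi}(\tilbtheta)$ as $\tfrac{1}{2}\G^{-1}(\btheta)\left[\nabla_\btheta \log\pi(\btheta) - \tfrac{1}{2}\nabla_\btheta \log\Abs{\G(\btheta)}\right]$, which is exactly the first line of \eqref{eq:reparam_langevin}.

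The crux of the proof is to show that the second-order It\^o correction $\tfrac{1}{2}\sum_j \partial^2_{\tilde{\theta}_j \tilde{\theta}_j}(\g^{-1})_i(\tilbtheta)$ coincides, for every $i$, with the last drift term in \eqref{eq:reparam_langevin},
\begin{equation*}
\tfrac{1}{2}\Abs{\G(\btheta)}^{-1/2}\sum_{j=1}^d \PDeriv{\theta_j}\left\{\Abs{\G(\btheta)}^{1/2}(\G^{-1}(\btheta))_{ij}\right\}.
\end{equation*}
Writing $A(\btheta) = (\Diff\g_\btheta)^{-1}$ so that $\G^{-1}=AA^T$, I would expand the It\^o correction via the chain rule $\partial_{\tilde\theta_j} A_{ik} = \sum_l (\partial_{\theta_l} A_{ik})\, A_{lj}$, and expand the manifold correction using Jacobi's formula $\partial_{\theta_j}\log\Abs{\G} = 2\,\mathrm{tr}(A\,\partial_{\theta_j}\Diff\g_\btheta)$ and the inverse-derivative identity $\partial_{\theta_j} A = -A(\partial_{\theta_j}\Diff\g_\btheta)A$. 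After cancellation, the remaining discrepancy reduces to showing that
\begin{equation*}
\sum_{j,l,m} A_{jl} A_{mk} \PDeriv[\,[\Diff\g_\btheta]_{lm}]{\theta_j}
= \sum_{j,l,m} A_{jk} A_{ml} \PDeriv[\,[\Diff\g_\btheta]_{lm}]{\theta_j},
\end{equation*}
which follows from Schwarz's theorem applied to $[\Diff\g_\btheta]_{lm} = \partial_{\theta_m} g_l$, giving $\partial_{\theta_j}[\Diff\g_\btheta]_{lm} = \partial_{\theta_m}[\Diff\g_\btheta]_{lj}$, followed by a relabeling of summation indices.

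The main obstacle is the index-juggling needed in this last step: the Langevin analogue of the Newton--Hamilton calculation in Theorem~\ref{thm:Newton_Hamilton_bijection} requires substantially more bookkeeping because the It\^o correction mixes second derivatives of $\g$ with the log-determinant term in a nontrivial way. Everything else is a direct consequence of It\^o's formula, Lemma~\ref{lem:gradient_under_transformation}, and the identity $\G(\btheta) = \Diff\g_\btheta^T\Diff\g_\btheta$.
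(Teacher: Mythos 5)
Your proposal is correct and follows the same overall architecture as the paper's proof: apply It\^o's formula to $\btheta(t)=\g^{-1}(\tilbtheta(t))$, identify the diffusion term with $\G^{-1/2}(\btheta)\diff\mathbf{B}_t$ via the covariance identity $(\Diff\g_\btheta)^{-1}(\Diff\g_\btheta)^{-T}=\G^{-1}(\btheta)$ (a weak/distributional equality, as you note), and rewrite the first-order drift using Lemma~\ref{lem:gradient_under_transformation}. Where you diverge is in the treatment of the second-order It\^o correction, which is the genuine crux. The paper observes that $\nabla_{\tilbtheta}\{\g^{-1}\}_i=(\Diff\g_\btheta)^{-T}\mathbf{e}_i$ and then invokes a general change-of-coordinates formula for the divergence (Lemma~\ref{lem:div_under_change_of_coord}), whose proof it omits entirely and outsources to a textbook reference; applied to this particular vector field, that lemma immediately yields the $\frac12\Abs{\G}^{-1/2}\sum_j\partial_{\theta_j}\{\Abs{\G}^{1/2}(\G^{-1})_{ij}\}$ term. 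You instead prove the needed special case from scratch: expanding the Laplacian via the chain rule, using Jacobi's formula and the derivative-of-an-inverse identity, and closing the gap with an index identity that follows from Schwarz's theorem applied to $[\Diff\g_\btheta]_{lm}=\partial_{\theta_m}g_l$ (I checked the relabeling and it does work). Your route is more self-contained and makes explicit exactly where the smoothness of $\g$ enters (equality of mixed partials), at the cost of heavier bookkeeping; the paper's route is shorter on the page but leans on an unproved auxiliary lemma that is strictly more general than what the theorem requires. Both are valid proofs of the same statement.
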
 

\begin{proof}
Let $\tilbtheta(t)$ be a solution of the SDE \eqref{eq:langevin}. By Ito's lemma, the stochastic process $\btheta(t) = \g^{-1}\big(\tilbtheta(t) \big)$ solves the following SDE in a weak sense:
\begin{equation}	
\label{eq:SDE_after_Ito_lemma}
\diff \btheta(t)
= \frac{1}{2} \Diff \g^{-1}_{\tilbtheta} \nabla_{\tilbtheta} \log \tilde{\pi} (\tilbtheta) + \Diff \g^{-1}_{\tilbtheta} \diff \mathbf{B}(t) + \frac{1}{2} \Delta_{\tilbtheta} \, \g^{-1}(\tilbtheta) \diff t
\end{equation}
where $\mathbf{B}(t)$ is a Brownian motion and $\Delta_{\tilbtheta} = \sum_i \partial^2 / \partial {\tilde{\theta}_i}^2$ is the Laplacian. Since $\G^{-1} (\btheta) = (\Diff \g^T_{\btheta} \Diff \g_{\btheta})^{-1} = \Diff \g^{-1}_{\tilbtheta} \Diff \g^{-T}_{\tilbtheta}$, we have
\begin{equation}
\label{eq:brownian_motion_equiv}
\G^{-1/2}(\btheta(t)) \left(\mathbf{B}(t+\epsilon) - \mathbf{B}(t) \right) \overset{d}{=} \Diff \g^{-1}( \tilbtheta(t)) \left(\mathbf{B}(t+\epsilon) - \mathbf{B}(t) \right)
\end{equation}
and the term $\Diff \g^{-1}_{\tilbtheta} \diff \mathbf{B}(t)$ in \eqref{eq:SDE_after_Ito_lemma} can equivalently be written as $\G^{-1/2} \big( \btheta \big) \diff \mathbf{B}(t)$. Also rewriting the term $\nabla_{\tilbtheta} \log \tilde{\pi} (\tilbtheta)$ using Lemma~\ref{lem:gradient_under_transformation}, the SDE \eqref{eq:SDE_after_Ito_lemma} can expressed as
\begin{equation}
\label{eq:SDE_rewritten}
\diff \btheta
= \frac{1}{2} \G^{-1}(\btheta) \left( \nabla_{\btheta} \log \pi (\btheta) - \frac12 \nabla_{\btheta} \log \Abs{\G(\btheta)} \right)
+ \G^{-1/2} \big( \btheta \big) \diff \mathbf{B}(t) + \frac{1}{2} \Delta_{\tilbtheta} \, \g^{-1}(\tilbtheta) \diff t
\end{equation}
To express the term $\Delta_{\tilbtheta} \, \g^{-1}(\tilbtheta)$ in terms of $\btheta$, note that 
\[ \nabla_{\tilbtheta} \, \{ \g^{-1}(\tilbtheta) \}_i
= \left( \mathbf{e}_i^T \, \Diff \g^{-1}_{\tilbtheta} \right)^T
= (\Diff \g_{\btheta})^{-T} \mathbf{e}_i	\]
Substituting this to Lemma~\ref{lem:div_under_change_of_coord}, we conclude that
\begin{equation*}
\nabla_{\tilbtheta} \cdot \nabla_{\tilbtheta} \, \{\g^{-1}(\tilbtheta)\}_i
=  |\G(\btheta)|^{-1/2} \sum_{j=1}^d \PDeriv{\theta_j} \left\{ |\G(\btheta)|^{1/2} \left(\G^{-1}(\btheta) \right)_{ij}  \right\} \diff t \qedhere
\end{equation*}
\end{proof}

\begin{Lemma}
\label{lem:div_under_change_of_coord}
If $\g:\btheta \to \tilbtheta$ is a smooth bijection between subsets of $\Real[d]$ and $\bm{v}(\tilbtheta)$ is a vector-valued function, then
\begin{equation}
\nabla_{\tilbtheta} \cdot \bm{v}(\tilbtheta)
= |\G(\btheta)|^{-1/2} \, \nabla_{\btheta} \cdot \left\{ |\G(\btheta)|^{1/2} (\Diff \g_{\btheta})^{-1} \bm{v}(\btheta) \right\}
\end{equation}
where $\G(\btheta) = \Diff \g^T_{\btheta} \Diff \g_{\btheta}$, $\bm{v}(\btheta) := \bm{v} \circ \g(\btheta)$, and $\nabla_{\btheta} \, \cdot  = \sum_i \partial / \partial \theta_i$ is the divergence operator.
\end{Lemma}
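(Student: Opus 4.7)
The plan is to prove this identity --- essentially the classical Voss--Weyl formula for divergence in curvilinear coordinates --- by a duality argument against test functions rather than expanding both sides componentwise. Specifically, for an arbitrary smooth compactly supported scalar function $\phi$ on the image $U$ of $\g$, I would show that both sides of the claimed identity produce the same value when paired with $\phi$; arbitrariness of $\phi$ together with continuity then forces the pointwise equality.

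The execution proceeds in three steps. First, apply ordinary integration by parts in $\tilbtheta$ to obtain
\[
\int \phi(\tilbtheta) \, [\nabla_{\tilbtheta} \cdot \bm{v}(\tilbtheta)] \, d\tilbtheta
= -\int \nabla_{\tilbtheta} \phi(\tilbtheta) \cdot \bm{v}(\tilbtheta) \, d\tilbtheta .
\]
Second, change variables via $\tilbtheta = \g(\btheta)$, using the Jacobian identity $d\tilbtheta = |\G(\btheta)|^{1/2} \, d\btheta$ and the chain rule $\nabla_{\tilbtheta} \phi \big|_{\tilbtheta = \g(\btheta)} = (\Diff \g_\btheta)^{-T} \nabla_\btheta (\phi \circ \g)(\btheta)$. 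After moving the transposed Jacobian across the inner product, the integrand becomes $\langle \nabla_\btheta (\phi \circ \g), \, (\Diff \g_\btheta)^{-1} \bm{v}(\btheta) \rangle \, |\G|^{1/2}$. Third, integrate by parts in $\btheta$ to move the gradient onto the vector field, producing
\[
\int (\phi \circ \g)(\btheta) \, \nabla_\btheta \cdot \left\{ |\G(\btheta)|^{1/2} (\Diff \g_\btheta)^{-1} \bm{v}(\btheta) \right\} d\btheta .
\]
On the other hand, applying the change of variables directly to the original integral $\int \phi(\tilbtheta) [\nabla_{\tilbtheta} \cdot \bm{v}(\tilbtheta)] \, d\tilbtheta$ gives $\int (\phi \circ \g)(\btheta) \cdot [\nabla_{\tilbtheta} \cdot \bm{v}](\g(\btheta)) \cdot |\G(\btheta)|^{1/2} \, d\btheta$. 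Equating the two expressions for every $\phi$ yields the claimed formula.

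The main obstacle is routine bookkeeping: tracking transposes when relating $\nabla_\btheta$ and $\nabla_{\tilbtheta}$, and verifying that neither integration by parts contributes a boundary term --- the latter is automatic because $\phi$ is compactly supported in $U$, hence $\phi \circ \g$ is compactly supported in the domain of $\g$. A direct pointwise derivation is also feasible, expanding $\nabla_\btheta \cdot \{|\G|^{1/2}(\Diff \g_\btheta)^{-1} \bm{v}\}$ via the product rule and using Jacobi's formula $\partial_k |\G|^{1/2} = \tfrac{1}{2} |\G|^{1/2} \operatorname{tr}(\G^{-1} \partial_k \G)$ together with $\partial_k (\Diff \g_\btheta)^{-1} = -(\Diff \g_\btheta)^{-1}(\partial_k \Diff \g_\btheta)(\Diff \g_\btheta)^{-1}$. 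The trace term and the inverse-Jacobian derivative term then cancel against each other up to the contribution matching $|\G|^{1/2} [\nabla_{\tilbtheta} \cdot \bm{v}](\g(\btheta))$, but this computation involves considerably more index juggling than the duality argument and I would only resort to it as a consistency check.
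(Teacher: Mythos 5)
Your proposal is correct, and it is worth noting that the paper does not actually supply a proof of this lemma at all: it declares the computation ``lengthy, involved'' and defers to Chapter 3 of Leonhardt and Philbin, which carries out the direct pointwise expansion (the Voss--Weyl formula). Your duality argument is therefore a genuinely different route, and a cleaner one. The three steps are all sound: the two integrations by parts lose no boundary terms because $\phi$ (hence $\phi \circ \g$) is compactly supported; the change of variables uses $d\tilbtheta = |\det \Diff\g_\btheta|\, d\btheta = |\G(\btheta)|^{1/2} d\btheta$, which is valid precisely because $\g$ maps between subsets of $\Real[d]$ so that $\Diff\g_\btheta$ is square and $|\G|^{1/2} = |\det \Diff\g_\btheta|$; and moving $(\Diff\g_\btheta)^{-T}$ across the inner product is the same transpose bookkeeping the paper itself uses in Lemma~\ref{lem:gradient_under_transformation}. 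Since $\phi \circ \g$ ranges over all compactly supported smooth test functions on the domain (as $\g$ is a diffeomorphism --- the invertibility of $\Diff\g_\btheta$ is already assumed in the statement), the fundamental lemma of the calculus of variations plus continuity of both sides gives the pointwise identity. What your approach buys is the complete avoidance of the index juggling --- Jacobi's formula, the derivative of the inverse Jacobian, and the cancellation of trace terms --- that makes the direct computation tedious; what it costs is only the mild additional hypothesis that both sides be continuous, which holds under the smoothness already assumed. You might add one sentence making explicit that $|\G|^{1/2} = |\det \Diff\g_\btheta|$ in this square-Jacobian setting, since that equality is what licenses the change-of-variables step, but otherwise the argument is complete.
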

\begin{proof}
The proof only requires elementary calculus, but the computation is lengthy, involved and hence is omitted here. The details can be found in, for example, Chapter 3 of \cite{leonhardt10}.
\end{proof}

\newcommand{\h}{\mathbf{h}}
\section{Explicit adaptive integrator: further details}
\label{app:explicit_reversible_integrator}
Here we provide further details on the derivation and the properties of the explicit adaptive integrator described in Section~\ref{sec:explicit_adaptive_integrator}. 

\subsection{Derivation of Equation \eqref{eq:vel_diff_eq_noncompact}  }
We first show how one can derive the differential equation \eqref{eq:vel_diff_eq_noncompact} for the parameters $(\btheta,\v)$ from \eqref{eq:time_rescaled_rmhmc}. Similar calculations in the case $\eta(\btheta) \equiv 1$ are carried out in \cite{lan15} and \cite{fang14}. Letting $\h(\btheta,\p) = (\btheta, \eta(\btheta) \G^{-1}(\btheta) \p)$ denote the change of variable from $(\btheta,\p)$ to  $(\btheta,\v)$, we have
\begin{equation} \label{subeq:flow_under_change_of_var}
\Deriv{s} (\btheta(s),\v(s)) 
= \Diff \h {(\btheta(s),\p(s))} \Deriv{s} (\btheta(s),\p(s))
\end{equation}
It is not difficult to show that the Jacobian $\Diff \h$ is given in terms of the variable $(\btheta,\v)$ as:
\begin{equation} \label{subeq:Jacobian_change_of_var}
\Diff \h
= \begin{bmatrix}
\I & \mathbf{0} \\ 
- \eta \G^{-1} \left( \sum_{i} \PDeriv{\theta_i} ( \frac{1}{\eta} \G ) \v \bm{e}_k^T \right) &  \eta \G^{-1}
\end{bmatrix}
\end{equation}	 
By plugging \eqref{subeq:Jacobian_change_of_var} and \eqref{subeq:flow_under_change_of_var} into the differential equation \eqref{eq:time_rescaled_rmhmc} for $(\btheta,\p)$, we obtain
\begin{equation} \label{subeq:lagrangian_dynamics}
\begin{aligned}
\Deriv[\btheta]{s}
&= \v, \quad
\Deriv[\v]{s}
&= - \eta^2 \G^{-1} \nabla_{\btheta} \phi 
- \eta \sum_{i=1}^d v_i \G^{-1} \PDeriv{\theta_i} \Big( \frac{1}{\eta} \G \Big) \v 
+ \frac12 \G^{-1} \v^T (\nabla_{\btheta} \G) \v
\end{aligned}
\end{equation}
With straightforward algebra, the expression for $\diff \v / \diff s$ can be re-written as:
\begin{equation}
\label{subeq:vel_deriv_before_symmetrization}
\begin{aligned}
\Deriv[v_k]{s}
&= -\eta^2 \left[ \G^{-1} \nabla_{\btheta} \phi \right]_k + \v^T \bGamma^k \v \\
& \text{ where } \ \bGamma^k_{ij} = \sum_\ell (\G^{-1})_{k\ell} \left[ \frac12 \PDeriv{\theta_\ell} G_{ij} - \eta \PDeriv{\theta_i} \left( \frac{1}{\eta} G_{\ell j} \right) \right]
\end{aligned}
\end{equation}
Since $\v^T \bGamma^k \v = \v^T (\bGamma^k)^T \v$, we can replace $\bGamma^k$ with its symmetrization $\frac12 (\bGamma^k + (\bGamma^k)^T)$ without changing the equation \eqref{subeq:vel_deriv_before_symmetrization}. So we re-define $\bGamma^k$ to be a matrix such that
\[ \bGamma^k_{ij}
= \sum_\ell (\G^{-1})_{k\ell} \left[ \frac12 \PDeriv{\theta_\ell} G_{ij} 
- \frac{\eta}{2} \PDeriv{\theta_i} \left( \frac{1}{\eta} G_{\ell j} \right)
- \frac{\eta}{2} \PDeriv{\theta_j} \left( \frac{1}{\eta} G_{\ell i} \right) \right] \]
Although the symmetrization of $\bGamma^k$ does not alter the differential equation at all, it will guarantee $(\v^*)^T \bGamma^k \v = \v^T \bGamma^k \v^*$ for all $\v$ and $\v^*$ --- a crucial property in ensuring the reversibility of our explicit adaptive integrator. Finally, if we let $\v^T \bGamma$ denote a matrix whose $k$-th row is given by $\v^T \bGamma^k$, we can express the differential equation \eqref{subeq:lagrangian_dynamics} in the following form, which agrees with \eqref{eq:vel_diff_eq_noncompact}:
\begin{equation} \label{subeq:vel_diff_eq_compact}
\begin{aligned}
\Deriv[\btheta]{s}
&= \v, \quad
\Deriv[\v]{s}
&= -\eta^2 \G^{-1} \nabla \phi + \v^T \bGamma \v
\end{aligned}
\end{equation}

\newcommand{\Fv}[1]{\mathbf{F}_{\v,#1}}
\newcommand{\Ft}[1][\epsilon]{\mathbf{F}_{\btheta,#1}}
\subsection{Reversible explicit discretization}
We now describe how to obtain the explicit reversible integrator \eqref{eq:explicit_integrator_updates} of the differential equation \eqref{subeq:vel_diff_eq_compact}. We also derive the formula for the Jacobian of the integrator, which is needed to calculate the acceptance probability of the variable-length trajectory CHMC algorithm in Section~\ref{app:vlt_chmc}. A reversible explicit update $\v \to \v^*$ is obtained by the following discretization based on a linearly implicit scheme of Kahan \citep{lan15, ss94}:
\begin{align} 	
&\frac{\v^* - \v}{\epsilon}
= - \eta^2 \G^{-1} \nabla_{\btheta} \phi + \v^T \bGamma \v^* \label{eq:vel_discretization} \\
\iff & \v^* = \left( \I - {\epsilon} \v^T \bGamma \right)^{-1} \left( \v - {\epsilon} \eta^2 \G^{-1} \nabla_{\btheta} \phi \right) \label{subeq:vel_update_formula}
\end{align}
Now let $\Fv{\epsilon}$ denote the map $\Fv{\epsilon}(\btheta,\v) = (\btheta, \v^*)$ corresponding to the update equation \eqref{subeq:vel_update_formula}. Note that $\Fv{\epsilon}$ is reversible thanks to the symmetry $\v^T \bGamma \v^* = (\v^*)^T \bGamma \v$. The Jacobian of the map $\v \to \v^*$ is obtained by differentiating Equation~\eqref{eq:vel_discretization} implicitly in $\v$:
\begin{align}
& \frac{\PDeriv[\v^*]{\v} - \I}{\epsilon}
= \v^T \bGamma \PDeriv[\v^*]{\v} + (\v^*)^T \bGamma \nonumber \\
\iff & \PDeriv[\v^*]{\v} = \left( \I - \frac{\epsilon}{2} \v^T \bGamma \right)^{-1} \left( \I + \frac{\epsilon}{2} (\v^*)^T \bGamma \right) \label{eq:v_Jacobian}
\end{align}
A reversible explicit update for $\btheta$ is given by a map $\Ft(\btheta,\v) = (\btheta + \epsilon \v,\v)$, which is obviously reversible and volume preserving. The integrator \eqref{eq:explicit_integrator_updates} is obtained by the composition $\Fv{\epsilon/2} \circ \Ft \circ \Fv{\epsilon/2}$, which is reversible and explicit because both $\Fv{\epsilon/2}$ and $\Ft$ are.

\subsection{Derivation of explicit adaptive integrator for DTHMC}
Here we derive the necessary formulas to carry out an efficient implementation of the integrator \eqref{eq:explicit_integrator_updates} in DTHMC settings. In particular, we show how to simplify the formula of $1 - \frac{\epsilon}{2} \v^T \bGamma$; the rest of the quantities in \eqref{eq:explicit_integrator_updates} are relatively straightforward to compute. To find a formula for the matrix $\bGamma^k$ as defined in \eqref{eq:bGamma_sym}, we start by computing the last two terms of ${\rm d} \v / {\rm d} s$ in \eqref{subeq:lagrangian_dynamics} namely the term $\frac12 \G^{-1} \v^T (\nabla_{\btheta} \G) \v$ and $\eta \sum_{i=1}^d v_i \G^{-1} \PDeriv{\theta_i} \Big( \frac{1}{\eta} \G \Big) \v$. Observe that
\begin{alignat*}{2}
&&\v^T \partial_{\theta_i} \G \thinspace \v
&= \IP{\u}{\v}^2 \partial_{\theta_i} g_\parallel - \IP{\u}{\v}^2 \partial_{\theta_i} g_\perp + \Norm{\v}^2 \partial_{\theta_i} g_\perp \\
\Longrightarrow  \
&&\v^T \nabla \G \thinspace \v
&= \IP{\u}{\v}^2 (\nabla g_\parallel - \nabla g_\perp) + \Norm{\v}^2 \nabla g_\perp \\
\Longrightarrow \
&&\G^{-1} \v^T \nabla \G \thinspace \v
&= \IP{\u}{\v}^2 \left( \left(\frac{1}{g_\parallel}-\frac{1}{g_\perp} \right) \IP{\nabla g_\parallel - \nabla g_\perp}{\u} \u + \frac{1}{g_\perp} (\nabla g_\parallel - \nabla g_\perp) \right)  \\
&& 
&\qquad + \Norm{\v}^2 \left( \left(\frac{1}{g_\parallel}-\frac{1}{g_\perp} \right) \IP{\nabla g_\perp}{\u} \u + \frac{1}{g_\perp} \nabla g_\perp \right) \\
\Longrightarrow \
&& \left( \frac12 \G^{-1} \v^T \nabla \G \thinspace \v \right)_k
&= \v^T \left[ \left( \left(\frac{1}{g_\parallel}-\frac{1}{g_\perp} \right) \IP{\nabla g_\parallel - \nabla g_\perp}{\u} u_k + \frac{1}{g_\perp} (\partial_{\theta_k} g_\parallel - \partial_{\theta_k} g_\perp) \right) \frac12 \u \u^T \right] \v \\
&&
& \qquad + \v^T \left[ \left( \left(\frac{1}{g_\parallel}-\frac{1}{g_\perp} \right) \IP{\nabla g_\perp}{\u} u_k + \frac{1}{g_\perp} \partial_{\theta_k} g_\perp \right) \frac12 \I \right] \v
\end{alignat*}
For the other term, we have
\begin{alignat*}{2}
&& \PDeriv{\theta_i} \left( \frac{1}{\eta} \G \right)
&= \left( \PDeriv{\theta_i}\frac{1}{\eta} \right) \G + \frac{1}{\eta} \left( \left(\partial_{\theta_i} g_\parallel - \partial_{\theta_i} g_\perp \right) \u \u^T + \partial_{\theta_i} g_\perp \, \I \right) \\
\Longrightarrow \
&& \eta \PDeriv{\theta_i} \left( \frac{1}{\eta} \G \right) \v 
&= \eta \left( \PDeriv{\theta_i} \frac{1}{\eta} \right) \G \v
+ \IP{\u}{\v} \left(\partial_{\theta_i} g_\parallel - \partial_{\theta_i} g_\perp \right) \u 
+ \partial_{\theta_i} g_\perp \, \v  \\
\Longrightarrow \
&& \eta \, \G^{-1} \PDeriv{\theta_i} \left( \frac{1}{\eta} \G \right) \v
&= \eta \left( \PDeriv{\theta_i} \frac{1}{\eta} \right) \v
+ \frac{1}{g_\parallel} \IP{\u}{\v} \left(\partial_{\theta_i} g_\parallel - \partial_{\theta_i} g_\perp \right) \u \\
&&
&\qquad \quad + \left( \frac{1}{g_\parallel} - \frac{1}{g_\perp} \right) \partial_{\theta_i} g_\perp \IP{\u}{\v} \u + \frac{1}{g_\perp} \partial_{\theta_i} g_\perp \, \v \\
&&
&= \eta \left( \PDeriv{\theta_i} \frac{1}{\eta} \right) \v
+ \frac{1}{g_\parallel} \IP{\u}{\v} \left(\partial_{\theta_i} g_\parallel \right) \u \\
&&
& \qquad \quad
- \frac{1}{g_\perp} \partial_{\theta_i} g_\perp \IP{\u}{\v} \u + \frac{1}{g_\perp} \partial_{\theta_i} g_\perp \, \v \\
\Longrightarrow \
&& \eta \sum_{i} v_i \G^{-1} \PDeriv{\theta_i} \left( \frac{1}{\eta} \G \right) \v &= \\ && & \hspace{-7em}
\eta \IP{\v}{\nabla \frac{1}{\eta}} \v 
+ \IP{\u}{\v} \IP{\v}{\nabla (\log g_\parallel - \log g_\perp)} \u + \IP{\v}{\nabla \log g_\perp} \, \v \\
\Longrightarrow \
&& \left( -\eta \sum_{i} v_i \G^{-1} \PDeriv{\theta_i} \left( \frac{1}{\eta} \G \right) \v \right)_k &=
\\* && & \hspace{-7em}
\v^T \left[
- \eta \nabla \frac{1}{\eta} \cdot \e_k^T 
- \u \cdot \nabla^T (\log g_\parallel - \log g_\perp) \, u_k 
- \nabla \log g_\perp \cdot \e_k^T \right] \v
\end{alignat*}
Since $\eta = \sqrt{g_\parallel}$, we have $\eta \nabla 1/\eta = - \frac12 \nabla \log g_\parallel$ and therefore
\begin{align*}
\left( -\eta \sum_{i} v_i \G^{-1} \PDeriv{\theta_i} \left( \frac{1}{\eta} \G \right) \v \right)_k & \\
& \hspace{-7em}
= \v^T \left[
\frac12 (\nabla \log g_\parallel - 2 \nabla \log g_\perp) \cdot \e_k^T 
- \u \cdot \nabla^T (\log g_\parallel - \log g_\perp) \, u_k 
\right] \v
\end{align*}
Thus the (symmetrized) matrix $\bGamma^k$ must be given by
\begin{align*}
\bGamma^k
&= \frac12 \left( \left(\frac{1}{g_\parallel}-\frac{1}{g_\perp} \right) \IP{\nabla g_\parallel - \nabla g_\perp}{\u} u_k + \frac{1}{g_\perp} (\partial_{\theta_k} g_\parallel - \partial_{\theta_k} g_\perp) \right)  \u \u^T  \\
&\qquad + \frac12 \left( \left(\frac{1}{g_\parallel}-\frac{1}{g_\perp} \right) \IP{\nabla g_\perp}{\u} u_k + \frac{1}{g_\perp} \partial_{\theta_k} g_\perp \right) \I \\
&\qquad + \frac14 (\nabla \log g_\parallel - 2 \nabla \log g_\perp) \cdot \e_k^T 
+ \frac14 \e_k \cdot (\nabla \log g_\parallel - 2 \nabla \log g_\perp)^T \\
&\qquad - \frac12 \u \cdot \nabla^T (\log g_\parallel - \log g_\perp) \, u_k
- \frac12 \nabla (\log g_\parallel - \log g_\perp) \cdot \u^T \, u_k 
\end{align*}
From this formula it easily follows that
\begin{align*}
\v^T \bGamma
&= \frac12 \left(\frac{1}{g_\parallel}-\frac{1}{g_\perp} \right) \IP{\nabla g_\parallel - \nabla g_\perp}{\u} \IP{\v}{\u} \u \u^T + \frac{1}{2h} \IP{\v}{\u} (\nabla g_\parallel - \nabla g_\perp) \cdot \u^T  \\
&\qquad + \frac12 \left(\frac{1}{g_\parallel}-\frac{1}{g_\perp} \right) \IP{\nabla g_\perp}{\u} \u \v^T + \frac{1}{2} \nabla \log g_\perp \cdot \v^T \\
&\qquad + \frac14 \IP{\v}{\nabla \log g_\parallel - 2 \nabla \log g_\perp} \I 
+ \frac14 \v \cdot (\nabla \log g_\parallel - 2 \nabla \log g_\perp)^T \\
&\qquad - \frac12 \IP{\v}{\u} \u \cdot \nabla^T (\log g_\parallel - \log g_\perp)
- \frac12 \IP{\v}{\nabla (\log g_\parallel - \log g_\perp)}  \u \u^T  \\
&= \frac12 \left(1-\frac{g_\parallel}{g_\perp} \right) \IP{\nabla \log g_\parallel}{\u} \IP{\v}{\u} \u \u^T + \frac{1}{2} \IP{\v}{\u} \left(\frac{g_\parallel}{g_\perp} \nabla \log g_\parallel - \nabla \log g_\perp \right) \cdot \u^T \\
&\qquad + \frac12 \left(\frac{g_\perp}{g_\parallel}-1 \right) \IP{\nabla \log g_\perp}{\u} \u \cdot (\v^T - \IP{\v}{\u} \u^T) + \frac{1}{2} \nabla \log g_\perp \cdot \v^T \\
&\qquad + \frac14 \IP{\v}{\nabla \log g_\parallel - 2 \nabla \log g_\perp} \I 
+ \frac14 \v \cdot (\nabla \log g_\parallel - 2 \nabla \log g_\perp)^T \\
&\qquad - \frac12 \IP{\v}{\u} \u \cdot \nabla^T (\log g_\parallel - \log g_\perp)
- \frac12 \IP{\v}{\nabla (\log g_\parallel - \log g_\perp)}  \u \u^T
\end{align*}
To express $\v^T \Gamma$ as a low-rank perturbation of identity, we first note that $\log g_\perp / \log g_\parallel = c $ where $c = \frac{1-\gamma}{\gamma (d-1)}$. Using this relation, we have the following three equalities:
\begin{align*}
&\frac14 \v \cdot (\nabla \log g_\parallel - 2 \nabla \log g_\perp)^T - \frac12 \IP{\v}{\u} \u \cdot \nabla^T (\log g_\parallel - \log g_\perp) 
\\* & \hspace{5em} 
= \left(\frac14 (1-2c) \v - \frac12 (1-c) \IP{\v}{\u} \u \right) \cdot \nabla^T \log g_\parallel 
\\
&\frac{1}{2} \IP{\v}{\u} \left(\frac{g_\parallel}{g_\perp} \nabla \log g_\parallel - \nabla \log g_\perp \right) \cdot \u^T
+ \frac12 \nabla \log g_\perp \cdot \v^T \\
&\hspace{5em} = \nabla \log g_\parallel \cdot \left( \frac{1}{2} \IP{\v}{\u} \left(\frac{g_\parallel}{g_\perp}-c\right) \u^T + \frac{c}{2} \v^T \right)
\end{align*}
\begin{align*}
&\frac12 \left(1-\frac{g_\parallel}{g_\perp} \right) \IP{\nabla \log g_\parallel}{\u} \IP{\v}{\u} \u \u^T \\*
&\hspace{8em} + \frac12 \frac{g_\perp}{g_\parallel} \left(1 - \frac{g_\parallel}{g_\perp} \right) \IP{\nabla \log g_\perp}{\u} \u \cdot (\v^T - \IP{\v}{\u} \u^T) \\
&\hspace{8em} - \frac12 \IP{\v}{\nabla (\log g_\parallel - \log g_\perp)}  \u \u^T \\
&\hspace{5em} = \frac12 \u \cdot \bigg[ \left(1-\frac{g_\parallel}{g_\perp} \right) \IP{\nabla \log g_\parallel}{\u} \left(\IP{\v}{\u} \u + \frac{c g_\perp}{g_\parallel} (\v - \IP{\v}{\u} \u) \right) \\
&\hspace{25em} - (1-c) \IP{\v}{\nabla \log g_\parallel} \u \bigg]^T
\end{align*}
So the formula for $\v^T \bGamma$ can be simplified as
\begin{align*}
\v^T \bGamma
&= \frac14 (1-2c) \IP{\v}{\nabla \log g_\parallel} \I \\
&\qquad + \left(\frac14 (1-2c) \v - \frac12 (1-c) \IP{\v}{\u} \u \right) \cdot \nabla^T \log g_\parallel \\
&\qquad + \nabla \log g_\parallel \cdot \left( \frac{1}{2} \IP{\v}{\u} \left(\frac{g_\parallel}{g_\perp}-c\right) \u^T + \frac{c}{2} \v^T \right) \\
&\qquad + \frac12 \u \cdot \bigg[ \left(1-\frac{g_\parallel}{g_\perp} \right) \IP{\nabla \log g_\parallel}{\u} \left(\IP{\v}{\u} \u + \frac{c g_\perp}{g_\parallel} (\v - \IP{\v}{\u} \u) \right) \\
&\hspace{20em} - (1-c) \IP{\v}{\nabla \log g_\parallel} \u \bigg]^T
\end{align*}
And finally we obtain
\begin{align*}
1 - \frac{\epsilon}{2} \v^T \bGamma
&= \left(1 - \frac{\epsilon}{8} (1-2c) \IP{\v}{\nabla \log g_\parallel} \right) \I \\
&\quad - \epsilon \left(\frac18 (1-2c) \v - \frac14 (1-c) \IP{\v}{\u} \u \right) \cdot \nabla^T \log g_\parallel \\
&\quad - \frac{\epsilon}{2} \nabla \log g_\parallel \cdot \left( \frac{1}{2} \IP{\v}{\u} \left(\frac{g_\parallel}{g_\perp}-c\right) \u^T + \frac{c}{2} \v^T \right) \\
&\quad - \frac{\epsilon}{4} \u \cdot \bigg[ \left(1-\frac{g_\parallel}{g_\perp} \right) \IP{\nabla \log g_\parallel}{\u} \left(\IP{\v}{\u} \u + \frac{c g_\perp}{g_\parallel} (\v - \IP{\v}{\u} \u) \right) \\
&\hspace{20em} - (1-c) \IP{\v}{\nabla \log g_\parallel} \u \bigg]^T
\end{align*}

\section{Variable length trajectory compressible HMC}
\label{app:vlt_chmc}
The explicit adaptive integrator of Section~\ref{sec:explicit_adaptive_integrator} is not volume-preserving and hence the standard acceptance-rejection scheme yields an incorrect stationary distribution. Compressible HMC (CHMC) modifies the acceptance probabilities appropriately to preserve the original target distribution. In GTHMC settings, however, CHMC in general suffers from low acceptance probabilities and poor mixing. The issue is that Hamiltonian dynamics no longer preserves the original target distribition after time-rescaling \eqref{eq:time_rescaled_Hamilton} \citep{nishimura16vlt}. Variable length trajectory compressible HMC (VLT-CHMC) constructs a transition kernel that better approximates the original dynamics by allowing individual trajectories to have different path lengths.  We focus on the motivations and main ideas behind the algorithm and highlight its advantage over the standard CHMC under GTHMC settings.  More thorough presentation and technical details are available in \cite{nishimura16vlt}. 

Let $\Fe$ denote the reversible bijective map $(\btheta_0,\v_0) \to (\btheta_1, \v_1)$ as defined in \eqref{eq:explicit_integrator_updates}.  CHMC
would use the map $\R \circ \Fe^n$ to generate a proposal where $\Fe^n = \Fe \circ \ldots \circ \Fe$ for $n \in \mathbb{Z}^+$ and $\R(\btheta,\v) = (\btheta, - \v)$. 
The acceptance rate of a proposal $(\btheta^*,\v^*) = \R \circ \Fe^n(\btheta_0,\v_0)$ tends to be low since the map $\Fe^n$ approximates the solution $(\btheta_0, \v_0) \to {(\btheta(s), \v(s))}$ for $s = n \epsilon$ of the time-rescaled dynamics \eqref{eq:vel_diff_eq_noncompact} and not of the original dynamics. In particular, the acceptance probability converges to $1 \wedge \eta(\btheta(\varsigma)) / \eta(\btheta_0)$ in the limit $\epsilon \to 0$ with $n \epsilon \to \varsigma > 0$ \citep{nishimura16vlt}.

On the other hand, VLT-CHMC constructs a transition kernel that better approximates the dynamics in the original time scale as follows. 
From the relation $\eta(\btheta) \diff s = \diff t$, it follows that solving the time-rescaled dynamics for time $s$ is equivalent to solving the original dynamics for time $t = \int_0^s \eta(\btheta(s')) \diff s'$. Therefore the map $(\btheta_0, \v_0) \to (\btheta, \v)(t = \tau)$ can be approximated by the map $\Fe^N(\btheta, \p) := \Fe^{N(\btheta, \p)}(\btheta, \p)$ where, denoting $(\btheta_i, \v_i) = \Fe^i(\btheta_0, \v_0)$,
\begin{equation}
\label{eq:def_variable_step_num}
N(\btheta_0,\v_0)  = N(\tau, \epsilon, \btheta_0, \v_0) = \min_n \left\{ n : \tau < \sum_{i=1}^n \, \epsilon \, \frac{\eta(\btheta_{i-1}) + \eta(\btheta_i)}{2}  \right\}
\end{equation}
The map $\Fe^N$ in general is not reversible and hence the map $(\btheta, \v) \to (\btheta^*, \v^*) = \R \circ \Fe^N(\btheta,\v)$ does not constitute a valid proposal move under the CHMC framework. However, it is possible to construct collections of states $S$ and $S^*$ containing $(\btheta,\v)$ and $(\btheta^*, \v^*)$ such that
\begin{equation}
\label{eq:reversibility_bet_sets}	
\begin{aligned}
\R \circ \Fe^N(S) \subset S^* 
&\ \text{ and } \ \R \circ \Fe^N(S^*) \subset S \\
\R \circ \Fe^N(S^c) \subset (S^{*})^c 
&\ \text{ and } \
\R \circ \Fe^N \left((S^{*})^c \right) \subset S^c
\end{aligned}
\end{equation}
A reversible Markov chain can then be obtained by proposing the transition from the collection of states $S$ to $S^*$ and vice versa.

Unlike CHMC ones, VLT-CHMC proposals are guaranteed high acceptance probabilities. Also, each iteration of VLT-CHMC requires little additional computation beyond what it takes to approximate a trajectory of the dynamics (in the original time scale). These facts are made precise in the following theorem. The proof and the empirical comparison between CHMC and VLT-CHMC can be found in \cite{nishimura16vlt}.
\begin{Theorem}[VLT-CHMC] \label{thm:VLT_CHMC}
Given a reversible integrator with stepsize $\epsilon$ of a time-rescaled Hamiltonian dynamics of the form~(\ref{eq:time_rescaled_Hamilton}), VLT-CHMC produces a reversible transition kernel with the following properties. In the statements below, a proposal generated from $(\btheta_0, \p_0)$ is considered and $(\btheta(\tau), \p(\tau))$ denotes the exact solution of Hamiltonian dynamics at time $\tau$ in the original time-scale: 
\begin{enumerate}[(a).]
	\item (High acceptance probability) For $\eta_0^* = \eta(\btheta(\tau))$ and $\eta_0 = \eta(\btheta_0)$, the acceptance probability of the transition to $S^*$ as $\epsilon \to 0$ converges to a value bounded below by $\frac{\eta_0^*}{\eta_0} \floor{\frac{\eta_0}{\eta_0^*}}$ if $\eta_0^* < \eta_0$ and by $\frac{\eta_0}{\eta_0^*} \floor{\frac{\eta_0^*}{\eta_0}}$ otherwise.
	\item The number of integration steps required for the proposal generation is given by  $$ N(\tau, \btheta_0, \p_0) +  \max \left\{ \floor{\frac{\eta_0}{\eta_0^*}}, \floor{\frac{\eta_0^*}{\eta_0}} \right\} + 1 \pm O(\epsilon) $$
	where $N$ is the step number function as in \eqref{eq:def_variable_step_num}. 
\end{enumerate}
\end{Theorem}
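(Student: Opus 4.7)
The plan is to analyze (a) and (b) via a combination of the standard CHMC acceptance-rejection machinery with an asymptotic study of the variable step count $N$ as $\epsilon \to 0$. The key observation is that since each integrator step of the time-rescaled dynamics covers physical time $\epsilon \eta(\btheta) + O(\epsilon^2)$, the step count required to cover time $\tau$ differs at the two ends of the trajectory by a factor of approximately $\eta_0/\eta_0^*$. All the structure in the statement---the $\floor{\cdot}$'s, the dichotomy between $\eta_0^* < \eta_0$ and $\eta_0^* > \eta_0$, the extra $\max\{\cdots\} + 1$ integration steps---traces back to this step-count imbalance.

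I would start by establishing two structural facts about $N$. First, for fixed $(\btheta_0, \v_0)$ and sufficiently small $\epsilon$, the step counts along consecutive iterates satisfy $N(\Fe^j(\btheta_0, \v_0)) = N(\btheta_0, \v_0) - j$ for small $j$, since each iterate has already accumulated time $j \epsilon \eta_0 + O(\epsilon^2)$. Second, the reverse step count $N(\R \circ \Fe^N(\btheta_0, \v_0))$ (the number of steps to cover time $\tau$ starting from the reversed endpoint) is not in general equal to $N(\btheta_0, \v_0)$; the two can differ by up to $\floor{\max(\eta_0/\eta_0^*, \eta_0^*/\eta_0)}$, because the reverse trajectory covers physical time at a different rate near its own starting point. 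This mismatch is precisely why the raw variable-step map $\R \circ \Fe^N$ fails to be an involution and why the enlarged sets $S$, $S^*$ are needed.

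Next, I would construct $S$ and $S^*$ by chaining consecutive $\Fe$-iterates around the base points $(\btheta_0, \v_0)$ and $(\btheta^*, \v^*)$, with widths dictated by the matching requirement that $\R \circ \Fe^N : S \to S^*$ be a bijection satisfying \eqref{eq:reversibility_bet_sets}. When $\eta_0 > \eta_0^*$, one forward integrator step at $\btheta_0$ corresponds to $\floor{\eta_0/\eta_0^*}$ backward steps at $\btheta^*$ up to $O(\epsilon)$, so $S$ must contain a chain of $\floor{\eta_0/\eta_0^*}$ consecutive iterates while $S^*$ is a single-step neighborhood of the endpoint. Applying the CHMC acceptance formula to this bijection, the Jacobian and target-density contributions collapse to $1 + O(\epsilon)$ because the exact dynamics preserves the augmented target by conservation of energy and $\Fe$ is locally third-order accurate. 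The leading contribution is therefore purely combinatorial: the ratio of group sizes yields exactly $(\eta_0^*/\eta_0) \floor{\eta_0/\eta_0^*}$ in the limit $\epsilon \to 0$, and symmetrically in the other case.

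For part (b), the breakdown is immediate: $N(\tau, \btheta_0, \p_0)$ steps produce the main trajectory to $(\btheta^*, \v^*)$, the $\max\{\floor{\eta_0/\eta_0^*}, \floor{\eta_0^*/\eta_0}\}$ additional steps generate the chain of neighboring iterates needed to identify $S^*$, and a final extra step is needed to confirm the cut-off condition in \eqref{eq:def_variable_step_num} at the last iterate. The main technical obstacle will be tracking the interaction between the integer-valued cumulative counts in \eqref{eq:def_variable_step_num} and the $O(\epsilon)$ corrections along the trajectory, and proving that the sets $S$, $S^*$ constructed in this way satisfy \eqref{eq:reversibility_bet_sets} exactly rather than only to leading order. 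Once the bijective structure is nailed down, taking $\epsilon \to 0$ in the CHMC acceptance formula is largely bookkeeping.
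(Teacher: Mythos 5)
First, a point of reference: the paper does not actually prove Theorem~\ref{thm:VLT_CHMC}; it states the result and defers both the proof and the precise construction of $S$ and $S^*$ to \cite{nishimura16vlt}, so there is no in-paper argument to compare yours against. Judged on its own terms, your proposal has the right architecture --- the floor functions do originate in the mismatch between the physical time $\epsilon\,\eta(\btheta)+O(\epsilon^2)$ covered per integrator step at the two ends of the trajectory, the failure of $\R\circ\Fe^N$ to be an involution is exactly why grouped sets are needed, and your reading of part (b) is sensible --- but your acceptance-probability accounting contains a concrete error. You claim the Jacobian and target-density contributions collapse to $1+O(\epsilon)$ ``because the exact dynamics preserves the augmented target.'' This is false for the time-rescaled dynamics \eqref{eq:time_rescaled_Hamilton}: as Supplement Section~\ref{app:vlt_chmc} itself states, the rescaled flow no longer preserves the target, and for plain CHMC the accumulated density-times-Jacobian ratio converges to $\eta_0^*/\eta_0$, not $1$. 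Under your accounting the limit would be a ratio of set cardinalities, which cannot produce the stated bound $\frac{\eta_0^*}{\eta_0}\floor{\frac{\eta_0}{\eta_0^*}}$ (not a ratio of integers in general). The correct decomposition is multiplicative: the factor $\eta_0^*/\eta_0$ comes from the accumulated Jacobian/density product along the trajectory, and the factor $\floor{\eta_0/\eta_0^*}$ from the cardinality of the enlarged set. Relatedly, you appear to have the roles of $S$ and $S^*$ reversed: when $\eta_0>\eta_0^*$, each step near $\btheta_0$ covers more physical time than a step near $\btheta^*$, so it is the set around the \emph{endpoint} that must contain $\floor{\eta_0/\eta_0^*}$ consecutive iterates; making $S$ the large set would drive the acceptance ratio down rather than recover the deficit.

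The second, larger gap is that constructing $S$ and $S^*$ so that \eqref{eq:reversibility_bet_sets} holds \emph{exactly} --- not merely to leading order in $\epsilon$ --- is the entire content of the theorem, and your proposal postpones it (``once the bijective structure is nailed down\dots''). Because $N(\cdot)$ is integer-valued and defined through the discontinuous stopping rule \eqref{eq:def_variable_step_num}, the sets must be defined combinatorially from that rule (e.g., as maximal runs of consecutive iterates on which the forward and reverse step counts take matching values), and the four inclusions in \eqref{eq:reversibility_bet_sets} must be verified without any limiting argument; the $\epsilon\to 0$ asymptotics enter only afterwards, when bounding the acceptance probability and the step count in (b). Until that construction is carried out, neither the reversibility of the kernel nor the $\max\{\cdot\}+1$ overhead in (b) is actually established.
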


\end{document}